\title{\begin{flushleft}\LARGE\textbf{Fairness in the Multi-Secretary Problem}\end{flushleft}}
\author{\begin{flushleft}\large 
GEORGIOS PAPASOTIROPOULOS, \textit{\small University of Warsaw}\\\vspace{-0.2cm}
ZEIN PISHBIN, \textit{\small Université Paris Dauphine‑PSL}
	\end{flushleft}
}
\pgfplotsset{compat=1.18}
\definecolor{winered}{rgb}{0.5,0.1,0.1}
\definecolor{darkgreen}{rgb}{0,0.35,0}
\newcolumntype{Y}{>{\centering\arraybackslash}X}
\newcolumntype{Z}{>{\raggedleft\arraybackslash}X}
\newcolumntype{a}{>{\columncolor{black!05}}Y}
\renewcommand*{\leq}{\leqslant}
\renewcommand*{\geq}{\geqslant}
\renewcommand{\epsilon}{\varepsilon}
\renewcommand{\part}{{{\mathrm{part}}}}
\renewcommand*{\leq}{\leqslant}
\renewcommand*{\geq}{\geqslant}
\renewcommand{\epsilon}{\varepsilon}
\renewcommand{\part}{{{\mathrm{part}}}}
\newcommand{\pabulib}{{{\textsc{Pabulib}}}}
\newcommand{\lv}[1]{}
	\let\tmp@n@s\f@size
	\let\tmp@n@b\f@baselineskip
	\let\tmp@s@s\f@size
	\let\tmp@s@b\f@baselineskip
	\xdef\semismall@size{\fpeval{(\tmp@n@s+\tmp@s@s)/2}}%
	\xdef\semismall@baselineskip{\fpeval{(\tmp@n@b+\tmp@s@b)/2}}%
\newcommand{\semismall}{\fontsize{\semismall@size}{\semismall@baselineskip}\selectfont}
\definecolor{verylightgray}{gray}{0.9}
\definecolor{crimson}{RGB}{220, 20, 60}
\definecolor{winered}{rgb}{0.5,0.1,0.1}
\renewcommand\emph[1]{{\color{winered}{{\textit{#1}}}}}
\Crefname{table}{Table}{Tables}
\Crefname{figure}{Figure}{Figures}
\Crefname{theorem}{Theorem}{Theorems}
\Crefname{definition}{Definition}{Definitions}
\Crefname{corollary}{Corollary}{Corollaries}
\Crefname{observation}{Observation}{Observations}
\Crefname{question}{Question}{Question}
\Crefname{lemma}{Lemma}{Lemmas}
\Crefname{example}{Example}{Examples}
\Crefname{reduction}{Reduction}{Reductions}
\Crefname{construction}{Construction}{Constructions}
\Crefname{subsection}{Section}{Sections}
\Crefname{section}{Section}{Sections}
\Crefname{proposition}{Proposition}{Propositions}
\Crefname{algorithm}{Algorithm}{Algorithms}
\Crefname{algocf}{Algorithm}{Algorithms}
\Crefname{lstlisting}{listing}{listings}
\Crefname{equation}{Relation}{Relations}
\Crefname{claim}{Claim}{Claims}
\newtheoremstyle{ex}
{}{}
{}%{\itshape}
{}
{\bfseries}
{.}
{ }
{%
	\thmname{#1}% the label
	\thmnumber{ #2}% the number
	\thmnote{{\bfseries: \ifex(\fi#3\ifex) \fi}}% the note
	% \global\deftrue % restore the standard
}
\newif\ifex
\theoremstyle{ex}{
		}}
\Crefname{exof}{Example}{Examples}
\theoremstyle{definition}
\newtheorem{definition}{Definition}
\newtheorem{example}{Example}
\theoremstyle{plain}
\newtheorem{theorem}{Theorem}
\begin{document}

	\date{}

\definecolor{mygreen}{RGB}{130,230,130}  
\definecolor{myred}{RGB}{255,150,150}
\newcommand{\greenorange}[1]{%
  \begin{tikzpicture}[baseline=(current bounding box.center)]
    \node[minimum width=2em, minimum height=1.5em, inner sep=0pt,
          fill=mygreen, 
          path picture={
            \fill[myred] (path picture bounding box.north west) -- 
                         (path picture bounding box.south east) -- 
                         (path picture bounding box.north east) -- cycle;
          }] {#1};
  \end{tikzpicture}%
}

\newcommand{\greenblack}[1]{%
  \begin{tikzpicture}[baseline=(current bounding box.center)]
    \node[minimum width=2em, minimum height=1.5em, inner sep=0pt,
          fill=mygreen, 
          path picture={
            \fill[black!40] (path picture bounding box.north west) -- 
                               (path picture bounding box.south east) -- 
                               (path picture bounding box.south west) -- cycle;
          }] {#1};
  \end{tikzpicture}%
}

\newcommand{\orangeblack}[1]{%
  \begin{tikzpicture}[baseline=(current bounding box.center)]
    \node[minimum width=2em, minimum height=1.5em, inner sep=0pt,
          fill=myred, 
          path picture={
            \fill[black!40] (path picture bounding box.north west) -- 
                               (path picture bounding box.south east) -- 
                               (path picture bounding box.south west) -- cycle;
          }] {#1};
  \end{tikzpicture}%
}

\definecolor{myblue}{RGB}{130,200,255}
\newcommand{\solidblue}[1]{%
  \begin{tikzpicture}[baseline=(current bounding box.center)]
    \node[minimum width=2em, minimum height=1.5em, inner sep=0pt,
          fill=myblue] {#1};
  \end{tikzpicture}%
}

\newcommand{\solidwhite}[1]{%
  \begin{tikzpicture}[baseline=(current bounding box.center)]
    \node[minimum width=2em, minimum height=1.5em, inner sep=0pt,
          fill=white!40] {#1};
  \end{tikzpicture}%
}

	\newgeometry{left=0.95in, right=0.95in, top=-0.05in, bottom=0.8in}
	\maketitle

	\vspace{-0.5cm}
    {\small\noindent
This paper bridges two perspectives: it studies the multi-secretary problem through the fairness lens of social choice, and examines multi-winner elections from the viewpoint of online decision making. After identifying the limitations of the prominent proportionality notion of Extended Justified Representation (EJR) in the online domain, the work proposes a set of mechanisms that merge techniques from online algorithms with rules from social choice---such as the Method of Equal Shares and the Nash Rule---and supports them through both theoretical analysis and extensive experimental evaluation.}

	\setcounter{tocdepth}{2} 
	\vspace{0.8cm}\noindent\rule{\textwidth}{0.8pt}
	\begin{center}
		\textsc{Contents}  
	\end{center}
	\vspace{-1.2cm}
	\renewcommand{\contentsname}{}
	\begin{spacing}{0.9}
		\tableofcontents 
	\end{spacing}
	\noindent\rule{\textwidth}{0.5pt}
	
	\setcounter{secnumdepth}{2}
	
	\newgeometry{left=0.95in, right=0.95in, top=0.75in, bottom=0.8in}

\section{Introduction}
\label{sec:intro}
The \emph{secretary problem}
appeared in the 1950s and quickly became a central topic in probability theory, optimization, and algorithms \citep{freeman1983secretary,ferguson1989solved,samuels1991secretary,dynkin1963optimum}. It is a simple and elegant model that led to the development of an entire field around online and sequential decision-making processes. In its base form, a known number of candidates arrive one by one in random order. A single decision maker must either accept or reject a candidate immediately upon their arrival. Once rejected, a candidate cannot be recalled, and once accepted, they cannot later be rejected or exchanged for another one. At the moment of arrival of a candidate, the decision maker gathers enough information to evaluate them with respect to those seen so far, but remains unaware of the quality of the yet unseen ones. The goal is to pick the best candidate overall. 

Often, real-world scenarios depart from this setting. 
It is typical to \emph{select multiple candidates} instead of one, and for \emph{decisions to be made by electorates}, rather than individuals, whose members may have varying and conflicting preferences.
Then, \emph{fairness among decision-makers} becomes crucial to ensure that diverse interests are adequately represented. 
In our work, we extend the base secretary model to reflect these aspects, still under the constraint that decisions must be made sequentially and irrevocably as candidates arrive. 
The following example illustrates our fairness considerations.

\begin{example}
    \label{example1} 
\textit{A hiring board in a Mathematics department is composed equally of theoreticians and applied mathematicians. Each group prefers candidates from their own field.
Due to the heavy workload involved in evaluating individual applications, the members of the board decide not to review all candidates at once but to use a sequential process. Since strong candidates may accept other offers if kept waiting, the board aims to decide on each candidate immediately upon evaluation.
The task is to select $k = 2$ candidates from a pool of six applicants ${c_1, c_2, \dots, c_6}$ considered in the order of their arrival (i.e., increasing index). The preferences of the two groups within the hiring board are expressed through cardinal ballots, as shown below; the colors indicate different candidate selections, which will be discussed shortly.}

\begin{table}[h!]
\centering
\small
\begin{tabular}{@{}c|cccccc@{}}
    \toprule
    & \solidblue{$c_1$} & \solidblue{$c_2$} & \greenorange{$c_3$} & \greenblack{$c_4$} & \solidwhite{$c_5$} & \orangeblack{$c_6$} \\ 
    \midrule
    \multicolumn{1}{l|}{Theory} & $0$  & $1$ & $2$ & $0$ & $0$ & $0$ \\
    \multicolumn{1}{l|}{Applied} & $2$ & $0$ & $0$ & $3$ & $1$ & $3$ \\ 
    \bottomrule
    \end{tabular}
\end{table}
\textit{The solution that consists of the candidates highlighted in black, namely $\{c_4, c_6\}$, clearly achieves the highest total utility.
    However, it leaves half the electorate (the theoretical mathematicians) completely unsatisfied. A fairer solution is to hire one candidate from the applied field ($c_1, c_4, c_5, c_6$) and one from theory ($c_2, c_3$).
    The three rules proposed in our work, namely \emph{Greedy Budgeting}, \emph{Online Method of Equal Shares} and \emph{Online Nash Product}, would take the online element into account and select the (blue) solution $\{c_1, c_2\}$, the (green) solution $\{c_3, c_4\}$, and the (red) solution $\{c_3, c_6\}$, respectively. While these sacrifice some overall utility, they ensure that no field is left unrepresented.}
\end{example}

Our work is shaped by three core dimensions: (i) \emph{online arrival of candidates}, (ii) \emph{fairness goal}, (iii) \emph{cardinal ballots of voters}.
This is the first paper to bring all these elements together.
\emph{Fairness under cardinal ballots} in offline settings has been extensively explored in computational social choice \citep{benade2023participatory,peters2021proportional,fain2016core,bhaskar2018truthful}; when fairness is not a goal and total utility is considered in an online fashion, the problem reduces to the \emph{multi-secretary problem} \citep{kleinberg2005multiple,besbes2022multi}; and when ballots are binary (i.e., approval-based), the problem corresponds to \emph{online multi-winner elections} \citep{Do}.
Our study lies at the intersection of online algorithms and social choice.

\begin{itemize}[left=0em, topsep=0pt, itemsep=0pt]
    \item \textit{It explicitly adapts concepts and rules from voting theory} (such as the Extended Justified Representation axiom, the Method of Equal Shares, the Method of Equal Shares with Bounded Overspending, the Nash Product rule) 
    \textit{to address fairness considerations in classical online 
    selection problems} (i.e., in the (multi-)secretary problem), and
    \item \textit{It applies algorithmic techniques from the online algorithms literature} (specifically, two distinct optimal stopping policies) \textit{to address foundational problems in social choice theory} (namely, multi-winner/committee elections with cardinal ballots) \textit{when extended to dynamic settings}.
\end{itemize}

\noindent Therefore, it aligns with two lines of work.
\emph{Dynamic social decision frameworks} have recently attracted considerable attention in computational social choice, with fairness considerations often being the central focus; see, indicatively, the works by \citet{kehne2025robust,dong2024proportional,halpern2023representation,hossain2021fair}, or the survey by \citet{elkind2024temporal}.
Yet, only \citet{Do} consider an online arrival of candidates requiring immediate decisions, akin to secretary problems.
The study of \emph{fairness in online algorithms} has also developed recently.
\citet{balkanski2024fair} study fairness for variants of the secretary problem in learning-augmented settings. 
Fairness in the single-candidate selection model is the focus of \citet{arsenis2022individual}.
Another notable work is that of \citet{correa2021fairness}, which is conceptually close to ours but also limited to selecting one candidate; it explicitly states the multi-selection setting as an open problem---which we address.\looseness-1

\paragraph*{Contributions.}
Our work can be seen as extending the study of \citet{Do} by considering cardinal ballots. \citeauthor{Do} showed that moving from the offline to the online setting entails only a modest loss in fairness: although the prominent notion of Extended Justified Representation (EJR)---which is achievable offline---cannot be fully satisfied online, it can still be well-approximated.
This was somewhat surprising, as one might reasonably expect a more significant drop in quality of the outcome when the selection mechanism lacks knowledge of the entire input.
We show that this counterintuitive conclusion no longer holds when voters submit more expressive ballots than simple approvals. Specifically, we analyze three natural relaxations of EJR and prove that none of them can be reasonably approximated in the setting we examine.
In response to these negative observations, we explore alternative paths to achieving fairness guarantees online. We propose three distinct rules
and we evaluate their performance both theoretically and experimentally. 
One of the rules we define, namely Online Method of Equal Shares, can be seen as fitting a general framework we introduce that converts offline multi-winner election rules into online ones.
Based on this observation we additionally formalize an online variant of the recently introduced Method of Equal Shares with Bounded Overspending \citep{papasotiropoulos2024method}, and show that it consistently stands out due to its strong performance.
Finally, in the course of formalizing fairness, we also introduce the angle of probabilistic satisfiability of known proportionality notions: a conceptual contribution that may be of interest to social choice beyond the online domain.

\section{Preliminaries}

We adopt the terminology and notation of social choice theory
as the fairness concepts and selection methods we explore are mostly inspired by it. 
An \emph{election} is a tuple $E = (C, V, k, \{u_i\}_{i\in [n]})$: 

    \begin{itemize}[left=0em, topsep=0pt, itemsep=0pt]
    \item $C = \{1, 2, \dots, m\}$ and $V = \{1, 2, \dots, n\}$ are the sets of \emph{candidates} and \emph{voters} respectively, \item $k$ is the upper bound on the \emph{size of the committee} to be selected, and we assume $2\leq k< m$,\item for each voter $i\in V$, $u_i: C\mapsto \mathbb{R}_{\geq 0}$ is a function that corresponds to the \emph{cardinal ballot} that voter $i$ casts indicating their \emph{additive utility}. If a set $W \subseteq C$ is elected, the overall utility of voter $i$ is $u_i(W)=\sum_{j\in W}u_i(j)$. 
\end{itemize} 

If $C$ is an ordered set, meaning the candidates are presented in a specific sequence that determines their appearance order, such that for each $i\in [n]$ the value of $u_i(j)$ is revealed at the moment when the candidate $j$ appears, then $E$ is an \emph{online election}. Otherwise, it is an \emph{offline election}.
In case $u_i(j)\in \{0,1\}, \forall (i,j)\in [n]\times[m],$ we say that we are in the approval setting and voters are expressing \emph{approval preferences}.
The definition of cardinal ballots above allows for arbitrary scores. A more realistic variant, which already generalizes approval preferences, involves \emph{range ballots}. There, voters rate each candidate on a fixed scale—such as with integers from 0 to 5, 10 or 20 (as done in online reviews, grading, employee evaluations, or satisfaction surveys). Then, we say that the considered election $E$ comes with an additional integer parameter $B$ and voters' ballots satisfy $u_i(j)\leq B$.

A subset of candidates $W \subseteq C$ is a \emph{feasible solution} if $|W|\leq k$. We denote the set of all feasible solutions by $\mathcal{W}$.
Our goal is to choose a feasible subset of
candidates, which we call an \emph{outcome}, based on voters’ ballots. 
A \emph{multi-winner election rule} $\mathcal{R}$, in short, a rule, is a function that takes as input an election $E=(C, V, k, \{u_i\}_{i\in [n]})$ and returns an outcome $\mathcal{R}(E)$ called the \emph{winning committee}.
In particular, an \emph{online multi-winner election rule} is an algorithm that processes candidates of an online election in the predefined sequence given by $C$, making an irrevocable decision at each step whether to include the currently examined candidate in the winning committee or not. If included, the candidate is said to be \emph{hired/elected}; otherwise, they are \emph{rejected}.
When making such a decision on candidate $j \in [m]$, the rule can only use information about voters' preferences for candidates that have already appeared, i.e., $u_i(c), \forall c\leq j$ and
$i\in [n]$.

\subsection{Fairness Considerations from Social Choice}
\label{sec:prelims2}
A prominent concept in achieving fair representation is the axiom of Extended Justified Representation (EJR) \citep{aziz2017justified}. It has received significant attention in the literature when it comes to formalizing fairness---especially in the context of multi-winner elections and
participatory budgeting (PB): the settings most closely
related to our framework---and its influence extends even beyond traditional voting  \citep{kellerhals2024proportional,papasotiropoulos2025proportional,boehmer2024approval}. As such, EJR plays a central role
in our work. It applies directly to both offline and online elections. 
Further proportionality axioms, which are implied by EJR, will also be presented and examined later, in \Cref{sec:JR}.

\begin{definition}
        A group of voters $S$ is \emph{$(\alpha, T)$-cohesive}, where $\alpha: C\mapsto
\mathbb{R}_{\geq0}$ and $T\subseteq C$, if $\frac{|S|}{n} \geq \frac{|T|}{k}$ and $u_i(c) \geq \alpha(c)$ for all $i\in S$ and $c\in T$. 
A rule $\mathcal{R}$ satisfies \emph{Extended Justified Representation} (\emph{EJR}) if for each election instance $E$, and each
$(\alpha, T)$-cohesive group of voters $S$ there exists a voter $i\in S$ such that $u_i(\mathcal{R}(E))\geq \sum_{c\in T} \alpha(c)$.
\end{definition}

Among the election rules examined in computational social choice from the perspective of proportionality, the \emph{Method of Equal Shares} (\emph{MES}),
which satisfies EJR, has received particular attention and has also been applied in real-life (offline) elections.
For its formal definition we refer the reader to the work of \cite{peters2021proportional}. 
Apart from EJR (and variants of it), we also explore an additional fairness concept in \Cref{sec:nash}, namely Nash Welfare Maximization.

\section{Proportional Selection with Certainty}
\label{sec:certainty}
As mentioned, EJR has become the most well-studied fairness notion in voting literature. Therefore, we first focus on EJR, along with its approximations and relaxations, to explore what is impossible and what can be guaranteed in this regard, in the online setting. In parallel, we aim to understand how these results differ from those by \citet{Do}, where voters were limited to expressing only approval preferences.

\subsection{Satisfying and Approximating Extended Justified Representation}
\label{sec:ejr}

It is already known that EJR cannot be satisfied by an online rule \citep{Do}.
Consequently, in hopes of positive results, we turn to approximate notions of EJR instead, by exploring three distinct ways to get around this limitation.
 The first variant we introduce aims to guarantee that each group of voters receives at least a fraction of the utility they would deserve according to EJR and it is simply inspired by multiplicative guarantees of approximation algorithms. The second notion closely relates to an established relaxation of EJR, ensuring that every group of voters would be appropriately represented if additional candidates were allowed in the outcome \citep{peters2021proportional}.
 The third variant ensures fairness for sufficiently large groups of voters, specifically those that exceed the size required by EJR by a fixed factor---this is the one explored by \citet{Do}.

\begin{definition}
\label{def:ejr_approx}
    A rule $\mathcal{R}$ satisfies:
    \begin{itemize}[left=0em, topsep=0pt, itemsep=0pt]
        \item \emph{$\beta$-approximation of Extended Justified Representation} (\emph{$\beta$-EJR}), for $\beta\geq 1,$ if for each election instance $E$, and each
$(\alpha, T)$-cohesive group of voters $S$ there exists a voter $i\in S$ such that $u_i(\mathcal{R}(E))\geq \nicefrac{1}{\beta}\sum_{c\in T} \alpha(c)$.
\item \emph{Extended Justified Representation up to $\gamma$ candidates} (\emph{EJR-$\gamma$}), for $\gamma\leq k-1$, if for each election instance $E$, and each 
$(\alpha, T)$-cohesive group of voters $S$ there exists a voter $i\in S$ and a set of candidates $X$ where $|X|\leq \gamma,$ such that $u_i(\mathcal{R}(E)\cup X)\geq \sum_{c\in T} \alpha(c)$.
\item \emph{Extended Justified Representation for $\delta$-cohesive groups}, for $\delta\leq k,$ 
if for each election instance $E$, and each 
$(\alpha, T)$-cohesive group of voters $S$ satisfying $\frac{|S|}{n} \geq \delta \frac{|T|}{k}$
there exists a voter $i\in S$ such that $u_i(\mathcal{R}(E))\geq \sum_{c\in T} \alpha(c)$. 
    \end{itemize}
\end{definition}

Below we show that in the examined framework, not only EJR cannot be satisfied, 
but it also cannot be reasonably approximated under any of the three approximate notions defined. 
The constructed instances highlight that the impossibility comes exactly because of the online nature of the examined problem combined with cardinal ballots and they show a notable contrast to the approval setting, where EJR could be well approximated \citep{Do}.

\begin{theorem}
\label{thm:ejr1}
    There is no online voting rule that satisfies $\beta$-EJR, for any finite positive value of $\beta$, even for range ballots.
\end{theorem}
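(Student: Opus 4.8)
The plan is to argue by an \emph{adaptive adversary} against an arbitrary deterministic online rule $\mathcal{R}$: fixing a target value $\beta$, I would build an instance---tailored to the decisions $\mathcal{R}$ makes as candidates arrive---on which no voter of some cohesive group attains within a factor $\beta$ of its deserved utility. The whole difficulty of the online model is that $\mathcal{R}$ must commit to each candidate before learning who arrives next, and the construction turns this into a dilemma: either $\mathcal{R}$ spends its $k$ seats too eagerly on ``decoy'' candidates and is then unable to serve a large coalition that appears later, or it declines a decoy and thereby strands the small group that wanted it. I would take the committee size to be any $k > \beta$ (say $k = \lceil \beta \rceil + 1 \ge 2$) and range ballots on the scale $B$, using only the two values $0$ and $B$, so that everything stays inside the range-ballot model.

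Concretely, let there be $n = k$ voters, partitioned into singleton groups $S_1, \dots, S_k$, so that $|S_t|/n = 1/k$, exactly the cohesiveness threshold for a single candidate. In the first phase I present $k$ \emph{decoy} candidates $d_1, \dots, d_k$, where $d_t$ is valued $B$ by the voter in $S_t$ and $0$ by everyone else. Since $\mathcal{R}$ is deterministic, its behaviour on this prefix is determined; I trace it. If $\mathcal{R}$ ever rejects a decoy, let $d_t$ be the first such one; from that point on the adversary presents only candidates valued $0$ by $S_t$. Then $S_t$ is $(\alpha, \{d_t\})$-cohesive with $\alpha(d_t) = B$, it deserves $B$, yet every voter of $S_t$ ends with utility $0 < B/\beta$, so $\beta$-EJR already fails for every finite $\beta$. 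Hence, to have any hope, $\mathcal{R}$ must accept all $k$ decoys, which exactly exhausts the committee.

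In that remaining branch I present $k$ further candidates $e_1, \dots, e_k$, each valued $B$ by \emph{all} voters. The grand coalition $S = V$ is then $(\alpha, \{e_1, \dots, e_k\})$-cohesive, because $|S|/n = 1 = k/k$ and $u_i(e_j) = B = \alpha(e_j)$; it deserves $\sum_{j} \alpha(e_j) = kB$. But the committee is already filled with the decoys, so each voter's realised utility is exactly $B$ (their own decoy), and the best any voter in $S$ attains is $B < kB/\beta$ since $k > \beta$. This contradicts $\beta$-EJR, and as $\beta$ was arbitrary and finite while $m = 2k > k \ge 2$, the conclusion follows.

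The step I expect to need the most care is making the fork genuinely exhaustive and the budget argument watertight: I must argue that declining \emph{any} decoy is immediately fatal (which forces the groups $S_t$ to value nothing else, handled by the adaptivity of the construction), and that accepting all $k$ of them leaves literally no seat for the coalition candidates. It is also worth checking that the cohesiveness inequalities hold with equality under $\geq$, and observing that the gap is the factor $k$, independent of $B$: it is this unbounded-in-$k$ multiplicative slack, rather than large ballot values, that defeats every finite $\beta$ and keeps the whole argument within the range-ballot model.
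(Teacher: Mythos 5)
Your proposal is correct and follows essentially the same two-phase adversary argument as the paper: force the rule to spend all $k$ seats on singleton-supported ``decoys'' (rejecting any one is fatal on the indistinguishable continuation where that voter receives nothing further), then present a commonly-valued block that a fully cohesive coalition deserves but can no longer receive. The only differences are parametric---you choose $k>\beta$ and win by a multiplicative factor $k/\beta$ using two-valued ballots, whereas the paper works for any fixed $k\geq 2$ and wins by an additive $\epsilon$ via decoy value $1-\epsilon$ and coalition values $\nicefrac{\beta}{k}$---and both suffice for the theorem as stated.
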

\begin{proof}
Fix an arbitrary $\beta,$ some $k$ and an arbitrarily small positive value $\epsilon$. Consider an instance on $2k$ candidates, namely $\{a_1,a_2,\dots,a_k,b_1,\dots, b_k\}$ and $k$ voters, the cardinal ballots of which appear in the table below and are upper bounded by $B=\beta$. 

\begin{table}[h!]
\centering
\begin{tabular}{@{}c|ccccccc@{}}
\toprule
                              & $a_1$        & $a_2$        & $\dots$ & $a_k$        & $b_1$             & $\dots$ & $b_k$             \\ \midrule
\multicolumn{1}{l|}{$v_1$}   & $1-\epsilon$ &              &         &              & $\nicefrac{\beta}{k}$ & $\dots$ & $\nicefrac{\beta}{k}$ \\
\multicolumn{1}{l|}{$v_2$}   &              & $1-\epsilon$ &         &              & $\nicefrac{\beta}{k}$ & $\dots$ & $\nicefrac{\beta}{k}$ \\
\multicolumn{1}{l|}{$\dots$} &              &              &         &              & $\dots$           &         &                   \\
\multicolumn{1}{l|}{$v_k$}   &              &              &         & $1-\epsilon$ & $\nicefrac{\beta}{k}$ & $\dots$ & $\nicefrac{\beta}{k}$ \\ \bottomrule
\end{tabular}
\end{table}

Say that the candidates are arriving in the order presented, i.e. first the $a$-candidates and then the $b$-candidates, all in increasing index order.
Candidate $a_1$ appears first. Say that $T=\{a_1\}$ and $S=\{v_1\}.$ For these groups it holds that $|S|\geq |T|\nicefrac{n}{k},$ and $\alpha(c)=1-\epsilon, \forall c\in T$. According to $\beta$-EJR, at least one voter in $S$ should receive a satisfaction of at least $\frac{1-\epsilon}{\beta}$ from the outcome.
The decision on whether to include $a_1$ in the winning committee or not should be done at this point and any online multi-winner election rule that rejects $a_1$ can't revoke this decision in case the candidates that will come later do not satisfy $v_1$ to an extent of $1-\epsilon$. Therefore, any rule that does not have a knowledge of the preferences of $v_1$ for the candidates that are yet to come, should include $a_1$ in the committee at this point. 
By the same argument, all candidates in the set $\{a_1,a_2,\dots,a_k\}$ will be selected one after the other. Hence, there is no place to include any candidate from $\{b_1,\dots,b_k\}$ in the solution. 
It holds that any rule that does not elect any candidate from $\{b_1,\dots,b_k\}$ does not satisfy $\beta$-EJR. 
This is because if we set $T=\{b_1,\dots,b_k\}$ and $S=\{v_1,v_2,\dots,v_k\},$ then $S$ is $((\frac{\beta}{k})_{i\in [k]},T)$-cohesive. Hence, in order to satisfy $\beta$-EJR, there should be a voter $i$ in $S$ such that 
$u_i(\mathcal{R}(E))\geq \frac{1}{\beta}\sum_{c\in T} \frac{\beta}{k}=1.$ But, under a rule that does not elect any candidate from $\{b_1,\dots,b_k\}$,
each voter in $S$ gets a satisfaction of at most $1-\epsilon$.
\end{proof}

The axiom of EJR up to $\gamma$ candidates is also unsatisfiable.
The proof follows the same rationale as that of \Cref{thm:ejr1}, though it employs a somewhat more involved construction.

\begin{theorem}
    There is no online voting rule that satisfies EJR-$\gamma,$ for any integer $\gamma\leq k-1$, even for range ballots
    \label{thm:ejr2}
\end{theorem}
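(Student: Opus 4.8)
The plan is to recycle the high-level rationale of \Cref{thm:ejr1}---force an online rule to commit its whole committee before the genuinely valuable candidates arrive, and then exhibit a cohesive group whose demand the committed outcome cannot meet---but to replace the single forcing batch by a sequence of $k+1$ \emph{escalating} batches, which is exactly what neutralises the extra slack granted by the $\gamma$-patch. Concretely, I would take a single voter (or any number of identical voters, so that the grand coalition $S=V$ is automatically $(\alpha,T)$-cohesive for every $T$ with $|T|\le k$) together with $k+1$ disjoint blocks $R_1,\dots,R_{k+1}$, each consisting of $\gamma+1$ fresh candidates, arriving in this order. Every voter values each candidate of $R_t$ at the same value $w_t$, chosen super-increasing, e.g.\ $w_t=(k+1)^t$, so that $w_t>k\,w_{t-1}$ and $\sum_{s<t}w_s<w_t$; since all values are at most $B=(k+1)^{k+1}$, this is a legitimate range ballot.

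The core step is a forcing lemma: for every $t\le k$, any online rule meeting EJR-$\gamma$ on the truncated instance $E^{(t)}$ (only $R_1,\dots,R_t$) must have selected at least one candidate of $R_t$. To prove it I would instantiate the definition with $S=V$, $T=R_t$, $\alpha\equiv w_t$, whose demand is $(\gamma+1)w_t$. If the rule picked nothing from $R_t$, its committee lies inside $R_1\cup\dots\cup R_{t-1}$ and hence has value at most $k\,w_{t-1}<w_t$ for every voter; as the most valuable available candidate is worth only $w_t$, any patch $X$ with $|X|\le\gamma$ contributes at most $\gamma\,w_t$, for a grand total strictly below $(\gamma+1)w_t$---a violation. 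This is precisely where escalation does the work: the cheap earlier selections do not count toward the expensive current demand, so a block of size $\gamma+1$ can be ``saved'' by the $\gamma$-patch only if the rule spends a genuinely new slot on it.

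Because an online rule's decisions on a prefix depend only on that prefix, its behaviour on $R_1,\dots,R_t$ is identical in $E^{(t)}$ and in the full instance $E^{(k+1)}$; applying the forcing lemma for each $t=1,\dots,k$ therefore shows that a rule satisfying EJR-$\gamma$ everywhere must pick at least one candidate from each of the $k$ disjoint blocks $R_1,\dots,R_k$. This already fills all $k$ seats, so no candidate of $R_{k+1}$ can be chosen. To finish, I would invoke $S=V$, $T=R_{k+1}$, $\alpha\equiv w_{k+1}$ on $E^{(k+1)}$: the committee has total value at most $k\,w_k<w_{k+1}$, a $\gamma$-patch adds at most $\gamma\,w_{k+1}$, and the sum stays below the demand $(\gamma+1)w_{k+1}$, contradicting EJR-$\gamma$.

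The delicate point---the one I would be most careful about---is calibrating the growth so the $\gamma$-patch is truly powerless: one needs $w_t$ to dominate the entire value a $k$-candidate committee of earlier rounds could accumulate (i.e.\ $k\,w_{t-1}<w_t$), which forces the exponential scaling and hence the range bound $B=(k+1)^{k+1}$. The second thing to pin down cleanly is the transfer of the truncated forcing to the full run, namely that irrevocability together with prefix-only information lets me treat each $E^{(t)}$ as an instance the rule must already satisfy. Notably, and in contrast to \Cref{thm:ejr1}, this construction needs no conflict among voters at all: the impossibility is driven purely by the clash between online commitment and the bounded $\gamma$-repair, which is what makes EJR-$\gamma$ unattainable for every integer $\gamma\le k-1$.
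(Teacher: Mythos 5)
Your argument is correct in substance but follows a genuinely different construction from the paper's. The paper uses $k$ voters, gives each voter a private block of $k$ candidates with geometrically escalating values (plus a common final block of $b$-candidates), forces a selection from each voter's block via singleton witness sets $T$, and derives the final contradiction from the $k$-element set of $b$-candidates. You instead use a single voter, $k+1$ blocks of $\gamma+1$ identically-valued candidates with escalation across blocks, and one uniform forcing lemma with $|T|=\gamma+1$ that also handles the last block. Your choice of $|T|=\gamma+1$ is a genuine advantage: since $|X|\le\gamma<|T|$, the patch cannot cover the demand $(\gamma+1)w_t$ even if $X$ is drawn from $T$ itself, so the forcing step is robust for \emph{every} admissible patch set, whereas with a singleton $T$ one must worry about the patch $X=T$ trivially meeting the demand. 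The arithmetic checks out: $w_t=(k+1)^t$ gives $k\,w_{t-1}<w_t$, so a full committee of earlier candidates plus any $\gamma$-patch stays strictly below $(\gamma+1)w_t$, and the cohesiveness requirement $|T|=\gamma+1\le k$ holds exactly because $\gamma\le k-1$.

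The one step you must repair is the transfer from the truncated instances $E^{(t)}$ to the full run. As stated, ``behaviour on $R_1,\dots,R_t$ is identical in $E^{(t)}$ and in $E^{(k+1)}$'' is not justified: an online rule in this paper receives the whole tuple $(C,V,k,\{u_i\})$ and in particular knows $m$, so its decisions on a prefix may depend on the total number of candidates (Online MES itself uses $m$ to set the length of its exploration phase). Truncating changes $m$ and can therefore change the rule's behaviour on the shared prefix. The standard fix---and what the paper's own wording ``in case the candidates that will come are not approved'' implicitly relies on---is to compare the real instance with a \emph{same-length} instance in which all blocks after $R_t$ are replaced by worthless candidates: the two instances agree on the prefix and on $m$, so the rule behaves identically through block $R_t$; on the padded instance no value arrives later and the $R_t$-candidates cannot be recalled once passed, so failing to select from $R_t$ during block $R_t$ leaves a final committee worth at most $k\,w_{t-1}$ and hence an EJR-$\gamma$ violation there. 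With that substitution your proof goes through.
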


\begin{proof}
Fix an arbitrary $\gamma,$ some $k$ and an arbitrarily small positive value $\epsilon$. Consider an instance on $k^2+k$ candidates, namely $\{a_1^1,a_1^2, \dots, a_1^k,
a_2^1,a_2^2, \dots, a_2^k,\dots,
a_k^1,a_k^2, \dots, a_k^k,
 b_1, \dots, b_k\}$ and $k$ voters, the cardinal ballots of which appear in the table below and are upper bounded by $B=2^{k+1}\epsilon$.

\begin{table*}[h!]
\centering
\begin{tabular}{@{}c|ccccccccccccccccc@{}}
\toprule
                              & $a_1^1$    & $a_1^2$     & $\dots$ & $a_1^k$       & $a_2^1$    & $a_2^2$     & $\dots$ & $a_2^k$       & $\dots$ & $a_k^1$    & $a_k^2$     & $\dots$ & $a_k^k$       & $b_1$             & $\dots$ & $b_k$             \\ \midrule
\multicolumn{1}{l|}{$v_1$}   & $\epsilon$ & $2\epsilon$ & $\dots$ & $2^k\epsilon$ &            &             &         &               &         &            &             &         &               & $2^{k+1}\epsilon$ &         & $2^{k+1}\epsilon$ \\
\multicolumn{1}{l|}{$v_2$}   &            &             &         &               & $\epsilon$ & $2\epsilon$ & $\dots$ & $2^k\epsilon$ &         &            &             &         &               & $2^{k+1}\epsilon$ &         & $2^{k+1}\epsilon$ \\
\multicolumn{1}{l|}{$\dots$} &            &             &         &               &            &             &         &               &         &            &             &         &               &                   &         &                   \\
\multicolumn{1}{l|}{$v_k$}   &            &             &         &               &            &             &         &               &         & $\epsilon$ & $2\epsilon$ & $\dots$ & $2^k\epsilon$ & $2^{k+1}\epsilon$ &         & $2^{k+1}\epsilon$ \\ \bottomrule
\end{tabular}
\end{table*}

Say that the candidates are arriving in the order presented, i.e. first all the $a_1$-candidates, then all the $a_2$-candidates, etc, then the all $a_k$-candidates and then the $b$-candidates, all in increasing order of superscript.
First, we note that at least one candidate from $\{\alpha_1^1,\dots,\alpha_1^k\}$ will be included in the solution by the time that $\alpha_1^k$ has been examined. This is because if at that point we consider $T=\{a_1^k\}$ and $S=\{v_1\},$ it holds that $|S|\geq |T|\nicefrac{n}{k},$ and $\alpha(c)=2^k\epsilon, \forall c\in T$.
According to EJR-$\gamma$, at least one voter in $S$ should receive a satisfaction of at least $2^k\epsilon$ from $\mathcal{R}(E)\cup X$, for some $X$ where $|X|\leq k-1$.
This is because any online multi-winner election rule that rejects $a_1^k$ can't revoke this decision in case the candidates that will come are not approved by $v_1$, and then, if $X=\{a_1^1,a_1^2,\dots,a_1^{k-1}\}$ it holds $u_i(\mathcal{R}(E)\cup X)=\epsilon\sum_{i=0}^{k-1}2^i=\epsilon(2^k-1)=2^k\epsilon-\epsilon$. Therefore, $a_1^{k}$ should be selected at the time it arrives. By similar arguments, after examining all $a$-candidates, it should hold that at least one candidate approved by each voter should be selected. Hence, there is no place to include any candidate from $\{b_1,\dots,b_k\}$ in the solution. 

It holds that any rule that does not elect any candidate from $\{b_1,\dots,b_k\}$ does not satisfy EJR-$\gamma$. This is because if we set $T=\{b_1,\dots,b_k\}$ and $S=\{v_1,v_2,\dots,v_k\},$ then $S$ is $((2^{k+1}\epsilon)_{i\in [k]},T)$-cohesive. Hence, in order to satisfy EJR-$\gamma$, there should be a voter $i$ in $S$ such that 
$u_i(\mathcal{R}(E)\cup X)\geq \sum_{c\in T} \alpha(c)=\epsilon k2^{k+1}.$
However, for each voter $i$ in $S$, using the 
set $X$ that consists of $(k-1)$ $b$-candidates, it holds that
$u_i(\mathcal{R}(E)\cup X) \leq 2^k\epsilon+ \sum_{i\in [k-1]}\epsilon 2^{k+1}=2^k\epsilon+\epsilon (k-1) 2^{k+1}= 2^{k}\epsilon(1+2(k-1))=2^{k}\epsilon(2k-1))<2^{k}\epsilon2k$. 
\end{proof}

Finally, akin to the previous relaxations of EJR, the third one introduced in \Cref{def:ejr_approx} is also impossible to satisfy.

\begin{theorem}
\label{thm:ejr3}
    There is no online voting rule that satisfies EJR for $\delta$-cohesive groups, for any $\delta \leq k$, even for range ballots.
\end{theorem}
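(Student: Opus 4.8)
The plan is to reuse the template of \Cref{thm:ejr1,thm:ejr2}: construct an instance whose first $k$ candidates $a_1,\dots,a_k$ are \emph{decoys} that any online rule is forced to elect one after another, thereby saturating the whole committee, after which a single high-value candidate $b$ arrives too late to be included; a $\delta$-cohesive group that deserves $b$ is then left unsatisfied. The one genuinely new ingredient relative to \Cref{thm:ejr1} is that, since $\delta$ may exceed $1$, singleton groups are no longer $\delta$-cohesive, so I cannot force each decoy with a private single-voter group. Instead I would let \emph{every} voter value \emph{every} decoy, so that the forcing group for each $a_j$ is the entire electorate, which is $\delta$-cohesive for $T=\{a_j\}$ whenever $\delta\le k$ (as then $|S|/n=1\ge \delta/k$).

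Concretely, I would take $n=k$ voters and assign the super-increasing utilities $u_i(a_j)=2^j$ for all $i$, together with $u_i(b)=2^{k+1}$. Doubling is the crucial device (as in \Cref{thm:ejr2}): the total utility a voter can extract from \emph{all} earlier decoys is $\sum_{j'<j}2^{j'}=2^{j}-2<2^{j}$, strictly below the threshold of the next decoy, which is exactly what makes the forcing airtight. All values are integers bounded by $B=2^{k+1}$, so the construction lives in the range-ballot model and establishes the ``even for range ballots'' clause.

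For the forcing step I would argue candidate-by-candidate. Fix $j$ and consider the continuation $E_j'$ that agrees with the real instance on the prefix $a_1,\dots,a_j$ but assigns value $0$ to every later candidate. In $E_j'$ the whole electorate is $\delta$-cohesive for $T=\{a_j\}$ with $\alpha(a_j)=2^j$, and if the rule rejected $a_j$ then every voter would end with utility at most $2^{j}-2<2^{j}$, violating the axiom; hence the rule must elect $a_j$ on $E_j'$. Since the decision on $a_j$ depends only on the prefix revealed through step $j$, which is identical in $E_j'$ and in the real instance, the rule also elects $a_j$ on the real instance; this is independent across $j$, so no nested induction is needed. Consequently all $k$ decoys are elected, the committee is saturated, and $b$ is necessarily rejected. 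Finally, taking $S$ to be all voters and $T=\{b\}$ yields a $\delta$-cohesive group (again $1\ge \delta/k$) that deserves $2^{k+1}$ but whose members each receive only $\sum_{j=1}^{k}2^j=2^{k+1}-2$, the desired contradiction; note that the instance does not depend on $\delta$, so it refutes the axiom uniformly for every $\delta\le k$.

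The main obstacle to watch is the interaction between the $\delta$-cohesion constraint and the size of the victim group's target set: since a $\delta$-cohesive group with $|S|\le n$ can demand at most $|T|\le k/\delta$ candidates, I cannot make the victim deserve a full committee's worth of utility as was done in \Cref{thm:ejr1,thm:ejr2}. The resolution is that a \emph{single} late candidate $b$ already suffices: budget exhaustion by the decoys leaves no room for even one $b$, and the doubling guarantees that the accumulated decoy utility stays just under $u_i(b)$. It is reassuring to verify that the instance is trivially satisfiable offline---electing $\{b\}$ alone satisfies every cohesive group simultaneously---so the impossibility is driven purely by online irrevocability, exactly the contrast with the approval setting of \citet{Do} that this section emphasizes.
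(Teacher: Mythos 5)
Your proof is correct, but it takes a genuinely different route from the paper's. The paper uses a \emph{single} voter whose decoy values form the arithmetic progression $1+j\epsilon$, followed by $k$ late candidates $b_1,\dots,b_k$, and witnesses the final violation with $T=\{b_1,\dots,b_k\}$; you use $k$ voters with identical super-increasing decoy values $2^j$ and a single late candidate $b$, witnessing the violation with $T=\{b\}$. Your construction buys two things the paper's does not. First, the doubling keeps the accumulated decoy utility $\sum_{j'<j}2^{j'}=2^j-2$ strictly below the next threshold $2^j$ at \emph{every} step, so the forcing really is airtight for all $j$; with the arithmetic progression, after electing $a_1$ and $a_2$ the voter already holds $2+3\epsilon\geq 1+3\epsilon$, so the paper's ``by similar arguments'' does not literally force $a_3$ onward. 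Second, your singleton target set $T=\{b\}$ paired with $S=V$ meets the $\delta$-cohesion requirement $|S|/n\geq\delta|T|/k$ for every $\delta\leq k$, whereas the paper's final witness ($|S|/n=1$, $|T|=k$) satisfies it only when $\delta\leq 1$, which is exactly the obstacle you identify and resolve. Your explicit continuation argument via $E_j'$ also makes the reliance on online irrevocability cleaner than the paper's informal phrasing. In short: same high-level template (force the decoys, starve the late arrival), but your instance is the one that delivers the statement for the full range of $\delta$.
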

\begin{proof}
Fix an arbitrary $\delta,$ some $k$ and an arbitrarily small positive value $\epsilon$. Consider an instance on $2k$ candidates, namely $\{a_1,a_2,\dots,a_k,b_1,\dots, b_k\}$ and $1$ voter, the cardinal ballot of which appears in the following table and are upper bounded by $B=1+k\epsilon$.
\begin{table*}[h!]
\centering
\begin{tabular}{@{}c|ccccccc@{}}
\toprule
                            & $a_1$        & $a_2$         & $\dots$ & $a_k$         & $b_1$                     & $\dots$ & $b_k$                     \\ \midrule
\multicolumn{1}{l|}{$v_1$} & $1+\epsilon$ & $1+2\epsilon$ &         & $1+k\epsilon$ & $1+\epsilon(\frac{k+1}{2}+\frac{1}{k})$ & $\dots$ & $1+\epsilon(\frac{k+1}{2}+\frac{1}{k})$ \\ \bottomrule
\end{tabular}
\end{table*}

Say that the candidates are arriving in the order presented, i.e. first the $a$-candidates and then the $b$-candidates, all in increasing index order.
Candidate $a_1$ appears first. Say that $T=\{a_1\}$ and $S=\{v_1\}.$ For these groups it holds that $|S|\geq |T|\nicefrac{n}{k},$ and $\alpha(c)=1+\epsilon, \forall c\in T$. According to EJR for $\delta$-cohesive groups, at least one voter in $S$ should receive a satisfaction of at least $1+\epsilon$ from the outcome.
The decision on whether to include $a_1$ in the winning committee or not should be done at this point and any online multi-winner election rule that rejects $a_1$ can't revoke this decision in case the candidates that will come do not satisfy $v_1$ to an extent of $1+\epsilon$. Therefore, any rule that does not have a knowledge of the preferences of $v_1$ for the candidates that are yet to come, should include $a_1$ in the committee at this point. 
Similarly, when $a_2$ arrives, we have that $v_1$ should now receive a satisfaction of at least $1+2\epsilon$ from the outcome. Currently, having included $a_1$, the satisfaction of $v_1$ is $1+\epsilon,$ so not sufficiently large to satisfy EJR for $\delta$-cohesive groups in case the candidates that will come next do not satisfy $v_1$, and, hence $a_2$ should be included in the committee as well. By similar arguments, all candidates in the set $\{a_1,a_2,\dots,a_k\}$ will be selected one after the other. Hence, there is no place to include any candidate from $\{b_1,\dots,b_k\}$ in the solution. 

It holds that any rule that does not elect any candidate from $\{b_1,\dots,b_k\}$ does not satisfy EJR for $\delta$-cohesive groups. This is because if we set $T=\{b_1,\dots,b_k\}$ and $S=\{v_1\},$ then $S$ is $((1+\epsilon(\frac{k+1}{2}+\frac{1}{k}))_{i\in [k]},T)$-cohesive. Hence, in order to satisfy EJR for $\delta$-cohesive groups, there should be a voter $i$ in $S$ such that 
$u_i(\mathcal{R}(E))\geq \sum_{c\in T} (1+\epsilon(\frac{k+1}{2}+\frac{1}{k}))=k+\epsilon\frac{k(k+1)}{2}+\epsilon.$ But, under a rule that does not elect any candidate from $\{b_1,\dots,b_k\}$,
each voter in $S$ gets a satisfaction of at most $\sum_{i\in [k]}1+i\epsilon=k+\epsilon\frac{k(k+1)}{2}$.
\end{proof}

Observe that the negative results of our work do not rely on complexity assumptions, demonstrating that the guarantees we aim for as per \Cref{def:ejr_approx} are impossible to achieve under any rule regardless of its running time. On the other hand, naturally, when presenting positive results we will focus exclusively on polynomial-time rules.
While one could explore combinations of the approximate notions we defined, due to the strongly negative results of the preceding theorems it seems more prudent to move away from EJR. This is what we do in \Cref{sec:JR} and \Cref{sec:probable}.

\subsection{Satisfying Weakenings of Extended Justified Representation}
\label{sec:JR}
Proportional Justified Representation (PJR) and Justified Representation (JR) are two well studied relaxations of EJR. For the definition of PJR, we refer to the work of \citet{sanchez2017proportional}, and we note that natural approximate notions of it can be defined in accordance to \Cref{def:ejr_approx}. Importantly, all the negative results presented in \Cref{thm:ejr1,thm:ejr2,thm:ejr3} continue to hold for PJR as well---for the latter, this follows directly from its proof; for the others, it follows after dividing voters' utilities from all but the last $k$ candidates by $k$ in the corresponding proofs. Hence we turn our attention to JR. This is currently defined in the literature only for the approval setting \citep{aziz2017justified} so we first adapt it to the setting with cardinal ballots.

\begin{definition}
% [JR]
\label{def:jr}
    A rule $\mathcal{R}$ satisfies \emph{Justified Representation} (\emph{JR}) if for each election instance $E$, and each $(\alpha, T)$-cohesive group $S$, with $|T|=1,$
there exists a voter $i\in S$ such that $u_i(\mathcal{R}(E))>0$.
\end{definition}

Towards designing a first method with provable fairness guarantees in the examined setting it is natural to revisit the Greedy Budgeting rule from the work of \citet{Do}, which was proven to satisfy PJR under approval preferences. We begin by extending its definition to the cardinal setting.

The \emph{Greedy Budgeting rule} works as follows.  Consider an online election $E = (C, V, k, \{u_i\}_{i\in [n]})$ given as input. Each voter is initially given $\nicefrac{k}{n}$ dollars. When a
candidate $j \in C$ arrives we look if the voters in the set $A_j:=\{i\in V: u_i(j)>0\}$ have at least $1$ dollar in total. If not, we reject $j$, otherwise, we hire $j$ and ask the voters in $A_j$ to pay $1$; each voter pays the same, or all of their remaining budget. We reduce their available budget accordingly and continue with the next candidate.
If at some point the algorithm has hired $k-x$ candidates, for some $x>0$, and there are exactly $x$ candidates that remain to come, we hire all of the remaining candidates. The pseudocode of the method is presented in \Cref{alg:greedy-budgeting}.

\begin{algorithm}[t]
    \caption{Online Greedy Budgeting Rule}
    \label{alg:greedy-budgeting}
    \DontPrintSemicolon
    
    \KwIn{$E = (C, V, k, \{u_i\}_{i \in V})$}
    $m \gets |C|, n \gets |V|$
    
    $W \gets \emptyset$\tcp*{current hired committee}
    \ForEach{voter $i \in V$}{
        $b_i \gets k / n$\;
    }
    
    \ForEach{candidate $j \in C$ in arrival position $t = 1, \dots, m$}{
    \tcp*{safeguard ensuring the election of a committee of size exactly $k$} 
        $\text{remaining} \gets m - t + 1$\;
               \If{$k - |W| = \text{remaining}$}{
            add to $W$ all the remaining candidates in the stream\;
            \Return $W$\;
        }
        \tcp*{hiring decision for $j$} 
        $A_j \gets \{ i \in V : u_i(j) > 0 \}$\tcp*{supporters of $j$}
        
        \If{$\sum_{i \in A_j} b_i \geq 1$}{
                   $W \gets W \cup \{j\}$
                   \Continue\tcp*{hire $j$}
        }
reduce the budget of each voter $i\in A_j$ by $1$ in total, with each paying equally or exactly $b_i$\;}
    \Return $W$
\end{algorithm}

We show that, when applied to the cardinal setting, the rule loses some of its theoretical appeal (recall that PJR is unsatisfiable in the setting), but still manages to achieve certain proportionality guarantees, despite its simplicity. 

\begin{theorem}
\label{thm:greedy}
    Greedy Budgeting rule returns a feasible solution and satisfies JR.
\end{theorem}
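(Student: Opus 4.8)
The plan is to establish the two assertions separately: feasibility follows from a short accounting argument, while JR is the substantive part and is proved by contradiction using how the budgets evolve.

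For feasibility I would track the total money in the system. Each of the $n$ voters starts with $k/n$ dollars, so exactly $k$ dollars are available in total, and money is only ever spent, never replenished. By construction every candidate hired through the ordinary branch (the one taken when $\sum_{i \in A_j} b_i \ge 1$) removes exactly one dollar from the system, since its supporters are able to, and do, pay $1$ in total. Hence at most $k$ candidates can be hired this way. It then remains to check that the safeguard branch cannot push $|W|$ above $k$: it fires only when $k - |W|$ equals the number of candidates still in the stream, and it appends precisely that many candidates before returning, bringing $|W|$ to exactly $k$. Combining the two branches gives $|W| \le k$ throughout, so the returned committee is feasible.

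For JR, fix an $(\alpha, \{c\})$-cohesive group $S$; since the cohesiveness condition is meaningful only when $\alpha(c) > 0$, every voter in $S$ satisfies $u_i(c) \ge \alpha(c) > 0$ and so genuinely supports $c$, and moreover $|S| \ge n/k$. I would argue by contradiction, assuming that no voter in $S$ receives positive utility, i.e.\ $u_i(W) = 0$ for all $i \in S$. The pivotal step is a \emph{payment invariant}: a voter $i$ is charged only when some candidate $j$ with $u_i(j) > 0$ is hired, and any such hire would immediately yield $u_i(W) \ge u_i(j) > 0$. Thus, under the contradiction hypothesis, no voter in $S$ is ever charged, so each retains its full initial budget $k/n$ throughout the run, in particular at the moment $c$ is examined.

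From here I would show that $c$ is necessarily selected, which closes the argument. When $c$ is reached, $S$ is contained in its supporter set $A_c$, so the available budget of $A_c$ is at least $\sum_{i \in S} b_i = |S| \cdot (k/n) \ge (n/k)(k/n) = 1$, and therefore the ordinary branch hires $c$. The only way not to reach $c$'s hiring decision is for the safeguard to have already fired at or before $c$'s position, but then $c$ is among the candidates the safeguard appends and is again placed in $W$. Either way $c \in W$, whence $u_i(W) \ge u_i(c) > 0$ for every $i \in S$, contradicting the hypothesis; so some voter in $S$ has positive utility and JR holds. I expect the main obstacle to be handling the interaction with the safeguard cleanly—one must verify that it neither breaks the payment invariant (it charges no one) nor lets $c$ slip through unselected (it can only add candidates, never reject them). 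A secondary point worth stating explicitly is the restriction to $\alpha(c) > 0$, since for $\alpha(c) = 0$ the cohesiveness condition degenerates to a constraint no rule could meet.
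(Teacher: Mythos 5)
Your proposal is correct and follows essentially the same route as the paper's proof: the same total-budget accounting for feasibility, and the same contradiction argument for JR (unsatisfied cohesive voters are never charged, hence retain their full initial budget of $k/n$ each, which sums to at least $1$ when $c$ arrives, forcing $c$ to be hired). Your explicit treatment of the safeguard branch and of the degenerate case $\alpha(c)=0$ adds a bit of rigor the paper leaves implicit, but does not change the argument.
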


\begin{proof}
Regarding feasibility, we first note that the initially distributed amount of money equals $n\cdot\nicefrac{k}{n}=k.$ Each candidate costs $1$ and is being bought only if their supporters have enough remaining funds to cover this cost. Therefore, no more than $k$ candidates can be bought by the voters. The final step of the rule ensures that it will always return a set of exactly $k$ candidates.

We will now show that the Greedy Budgeting rule satisfies JR. For contradiction, suppose that it does not. Then, there is a set of voters $S$ witnessing the violation of the axiom. So the voters of this set are at least $\nicefrac{n}{k}$ and all value positively one candidate, say $c$. Moreover, their satisfaction from the outcome is zero, but then, they have not contributed any amount of money towards including any of their approved candidates in the outcome. Therefore, at the moment of consideration of $c$, they all had their initial budget which equals $|S|\frac{k}{n}\geq \frac{n}{k}\frac{k}{n}=1$. Consequently, $c$ should have been hired at that point by the rule.
\end{proof}

While JR is a relatively weak axiom, it nonetheless guarantees a basic form of fairness. Hence, Greedy Budgeting is arguably more fair than an approach focused on maximizing total utility, which could completely overlook minority preferences and overrepresent certain groups---recall \Cref{example1}. 
Moreover, \Cref{thm:greedy} can also be strengthened to an analog of PJR, which must again refer to the number of approved candidates in the outcome and not accounting for utilities.

Below we define a stronger version of JR, motivated by the observation that \Cref{def:jr} only guarantees some positive utility from $\mathcal{R}(E)$ regardless of the value of $\alpha(c)$ when $T = \{c\}$. Instead, one could aim to adapt JR staying closer to the spirit of EJR. Such a strengthening of \Cref{def:jr} follows.

\begin{definition}
\label{def:jr-str}
    A rule $\mathcal{R}$ satisfies \emph{Strong Justified Representation} (\emph{strong-JR}) if for each election instance $E$, each $\alpha: C \mapsto \mathbb{R}$, $T=\{c\}$ for some $c\in C$, and each
$(\alpha, T)$-cohesive group of voters $S$, there exists a voter $i\in S$ such that $u_i(\mathcal{R}(E))\geq \alpha(c)$.
\end{definition}

As also supported by the results in \Cref{sec:ejr}, there is little hope when moving toward EJR. Indeed, for the stronger adaptation of JR defined in \Cref{def:jr-str}, we also get a negative result.\looseness-1

\begin{theorem}
\label{thm:jr-str}
    There is no online voting rule that satisfies strong-JR.
\end{theorem}

\begin{proof}
    Consider the simple instance where there are three candidates $\{a,b,c\}$ and two voters, $v_1,v_2$ submitting the ballots $(1,0,0)$ and $(0,1,2)$ respectively. Say that $k=2$ and that the candidate $a$ comes before $b$ who comes before $c$.
Say that $T=\{a\}$ and $S=\{v_1\}.$ For these groups it holds that $|S|\geq |T|\nicefrac{n}{k},$ and $\alpha(c)=1, \forall c\in T$. According to strong-JR, at least one voter in $S$ should receive a satisfaction of at least $1$ from the outcome.
The decision on whether to include $a$ in the winning committee or not should be done at this point and any online multi-winner election rule that rejects $a$ can't revoke this decision in case the candidates that will come do not satisfy $v_1$ to an extent of $1$. Therefore, any rule that does not have a knowledge of the preferences of $v_1$ for the candidates that are yet to come, should include $a$ in the committee at this point. Using the same argument for $T=
\{b
\}$ and $S=
\{v_2
\}$, we can show that any such rule will also include $b$ in the committee.
Hence, then, there is no place to include candidate $c$ in the solution. But, it holds that any rule that does not elect $c$ does not satisfy strong-JR. This is because if we set $T=\{c\}$ and $S=\{v_2\},$ then $S$ is $(2,T)$-cohesive. Hence, in order to satisfy strong-JR, there should be a voter $i$ in $S$ such that 
$u_i(\mathcal{R}(E))\geq 2.$ But, each voter in $S$ gets a satisfaction of at most $1$ if $c$ is not hired.
\end{proof}

\section{Proportional Selection with Risk}
\label{sec:probable}
All the negative results we unfolded regarding the satisfiability of approximate and weakened notions of EJR, 
while differing in their technical details,
are driven by the same underlying idea: towards satisfying a strong fairness guarantee in the online setting a rule must select a candidate $c$ at the moment of their arrival, provided that $c$ receives sufficient support. Under this principle, the very voters who support $c$ might later encounter a candidate they value significantly more; the online nature of the problem prevents any knowledge of future arrivals. This tension is inherent in the examined setting and any rule aiming to always act safely towards satisfying EJR must adopt such a conservative selection strategy.

In this section we propose an alternative perspective; one that accepts risk as a tool for bypassing the impossibility results of \Cref{thm:ejr1,thm:ejr2,thm:ejr3} (and \Cref{thm:jr-str}). Rather than insisting on achieving proportionality in every instance, we propose accepting occasional shortfalls.
More concretely, we will propose rules that do not commit to hiring candidates early solely because their supporters deserve some representation, 
because, by doing so, we risk exhausting the committee capacity too early, before more valuable candidates have had the chance to arrive. For the theoretical guarantees of the rules we propose hereinafter, we adopt the commonly used \emph{random arrival model} from the online voting literature \citep{gupta2021random} which assumes that elements arrive one by one in a uniformly random order.
Other models, such as the adversarial one, are natural options for follow-up work.

\subsection{Online Method of Equal Shares and Beyond}\label{sec:mes}
\label{sec:probable1}
Our approach draws inspiration from both the literature on secretary problems \citep{babaioff2007knapsack} and the most prominent voting rule for proportional representation \citep{peters2020proportionality}, combining ideas from both. 
The algorithm we propose, to be called \emph{Online Method of Equal Shares}, operates in two phases: an exploration phase of length $t$, followed by a selection phase over the remaining $m - t$ candidates. The parameter $t$, referred to as the threshold time, will be fixed later based on the analysis. Let $C_S \subseteq C$ denote the set of the first $t$ candidates of $C$, i.e., those observed during the exploration phase.\footnote{We assume without loss of generality that $t\geq k$; otherwise, we can add dummy candidates with no support.} At the end of the first phase, we construct a reference committee $C_R \subseteq C_S$ by running MES on the election induced by the first $t$ candidates. Importantly, candidates in $C_R$ will not be hired, they are only used as samples for evaluation of the quality of subsequent candidates from $C\setminus C_S$. We next proceed to the selection phase. There we maintain a running committee $C_R^r$, initialized to $C_R$. For each arriving candidate $c$, we consider the set of candidates in $C_R^r \cup \{c\},$ which are $k+1$ many. We compute the MES outcome for this candidate set. If MES returns the set $C_R^r$, then $c$ is rejected. If MES returns a new winning committee, and there is a candidate $\hat{c} \in C_R^r$ that hasn't been selected by MES (one could see it as if $c$ took the place of $\hat{c}$) we proceed as follows: We check whether $\hat{c} \in C_R$. If so, then $c$ is hired; otherwise is rejected. In other words, $c$ is hired not only if they are preferred over $\hat{c}$ but also if $\hat{c}$ was part of the initial trusted set $C_R$. Regardless of whether $c$ is hired or not, as long as $c$ was included in the outcome of MES, we update $C_R^r$ to $C_R^r \cup \{c\} \setminus \{\hat{c}\}$ and proceed with the next candidate. 
To ensure the algorithm selects exactly $k$ candidates, we include a final safeguard, the same one used in the Greedy Budgeting rule. We refer to \Cref{alg:online-mes} for a pseudocode.

\begin{algorithm}[t]
    \caption{Online Method of Equal Shares}
    \label{alg:online-mes}
    \DontPrintSemicolon
    
    \KwIn{$E=(C,V,k,\{u_i\}_{i\in V})$}
    
    $m \gets |C|, n \gets |V|$\;
    $C_S \gets$ first $\lfloor m/e \rfloor$ candidates of $C$\tcp*{exploration phase}
    $C_R \gets \textsc{MES}$ applied to the restriction of $E$ to $C_S$\tcp*{reference committee}
    $C_R^r \gets C_R, W \gets \emptyset$\tcp*{running sample committee and hired candidates}

    \ForEach{candidate $c \in C \setminus C_S$ in arrival position $i = \lfloor m/e \rfloor+1,\dots,m$}{
      \tcp*{safeguard ensuring the election of a committee of size exactly $k$} 
        $\text{remaining} \gets m - t + 1$\;
               \If{$k - |W| = \text{remaining}$}{
            add to $W$ all the remaining candidates in the stream\;
            \Return $W$\;
        }
        \tcp*{hiring decision for $c$}
        $X \gets C_R^r \cup \{c\},$
        $W' \gets \textsc{MES}$ applied to the restriction of $E$ to $X$\;
        
        Choose the $\hat{c} \in C_R^r \setminus W'$ if it exists\;
            \If{there is such a candidate $\hat{c}$ and $\hat{c} \in C_R$}{
                $W \gets W \cup \{c\}$ 
            $C_R^r \gets C_R^r \cup \{c\} \setminus \{\hat{c}\}$
        }
        
        \If{$|W| = k$}{
            \Return $W$
        }
    }
    
    \Return $W$\;
\end{algorithm}

In what follows, we present a theoretical guarantee for Online MES. Specifically, we show that it satisfies EJR with a bounded from below probability as $m$ increases. This probability depends exponentially on $k$, but also improves exponentially when aiming to satisfy the approximate notion of EJR up to $p$ candidates (see \Cref{def:ejr_approx}). 
Most importantly, this probability is independent of the total number of candidates $m$, unlike what a naive random selection would suggest. It only depends on the target committee size $k$, making the rule particularly well-suited for applications where the goal is to elect small committees---
we validate this experimentally (see for instance Experiment 2 in \Cref{sec:exp}).
We note that our theoretical guarantee holds for the single-approval case, i.e., for instances where for every voter $i$ there is at most one candidate $j$ such that $u_i(j)>0$,
yet, as we empirically show in \Cref{sec:exp}, the proposed method also succeeds in producing fair solutions for general instances.

\begin{theorem}
\label{thm:mes}    Online Method of Equal Shares returns a feasible solution and satisfies EJR up to $p$ candidates with probability $(\nicefrac{1}{e})^{k-p}$ as $m$ tends to infinity, for $t=\lfloor\nicefrac{m}{e}\rfloor$, for the single-approval case.\looseness-1
\end{theorem}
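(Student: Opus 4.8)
The plan is to handle the two assertions separately and, for the probabilistic one, to exploit the fact that in the single-approval regime both MES and the axiom EJR collapse to a transparent combinatorial condition, reducing the whole statement to a random-placement question.

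Feasibility is the easy part and mirrors the argument in \Cref{thm:greedy}: the safeguard branch hires every remaining candidate as soon as the number of unprocessed candidates equals $k-|W|$, so the run terminates with $|W|=k$; and since each iteration adds at most one candidate to $W$, the size never overshoots $k$. For the EJR guarantee I would first re-read the axiom in the single-approval case. As each voter values at most one candidate, any nonempty $(\alpha,T)$-cohesive group $S$ can have at most one $c^\ast\in T$ with $\alpha(c^\ast)>0$, every member of $S$ approves $c^\ast$, and the binding instance is $T=\{c^\ast\}$, forcing $|S|\ge n/k$. Hence EJR is equivalent to ``every candidate with at least $n/k$ supporters lies in $\mathcal{R}(E)$''; I will call these the \emph{demanding} candidates, and since their supporter sets are disjoint there are at most $k$ of them. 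In the same way, EJR up to $p$ candidates (\Cref{def:ejr_approx}) becomes ``at most $p$ demanding candidates are missing from the outcome,'' as any single missing demanding candidate can be supplied through the slack set $X$. I would also record that offline MES buys a candidate precisely when it is demanding, so the reference committee $C_R$ is exactly the set of demanding candidates appearing in the exploration window, completed to size $k$ with dummy (zero-support) candidates.

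The structural heart of the proof is the claim that Online MES hires a demanding candidate if and only if it arrives in the selection phase. One direction is immediate, since exploration candidates never enter $W$. For the converse I would carry an invariant on the running committee $C_R^r$: it always has size $k$, splits into demanding members and dummy members, and all of its dummies belong to $C_R$. When a demanding candidate $c$ arrives in the selection phase, the $(k{+}1)$-candidate MES call on $C_R^r\cup\{c\}$ must retain $c$ (it is affordable) and must evict an unaffordable, hence dummy, member $\hat c$; because $\hat c\in C_R$, the rule hires $c$. The point that needs care is that a dummy is always available to be evicted: with at most $k$ demanding candidates in total, a size-$k$ committee $C_R^r$ can never be saturated with demanding candidates at the moment a fresh one appears, which I would prove by tracking the counts of demanding versus dummy members across the run. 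I expect this invariant—together with pinning down the exact eviction/tie-breaking behaviour of MES on $k+1$ candidates, reconciling it with the pseudocode's update rule, and checking the interaction with the safeguard near the end of the stream—to be the main obstacle of the argument.

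Given the structural claim, the set of demanding candidates captured in $W$ is exactly the set of demanding candidates landing in the selection phase, so the outcome satisfies EJR up to $p$ candidates exactly when at most $p$ of the (at most $k$) demanding candidates fall inside the exploration window $C_S$. It then remains to bound this probability under the uniform random order with $t=\lfloor m/e\rfloor$. I would lower-bound it by the probability that a fixed set of $k-p$ demanding candidates all avoid $C_S$; a single fixed candidate avoids the first $t$ positions with probability $(m-t)/m$, and as $m\to\infty$ this tends to $1-1/e$, so the event for the fixed set has probability tending to $(1-1/e)^{k-p}\ge(\nicefrac{1}{e})^{k-p}$. This yields the claimed bound, and I would close by observing that it is uniform in the total number of candidates $m$ and only improves when the actual number of demanding candidates is below $k$, matching the statement.
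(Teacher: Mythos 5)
Your feasibility argument and your single-approval reformulation of EJR are fine, and your overall strategy is genuinely different from the paper's. You aim for the deterministic claim that a ``demanding'' candidate (one with at least $\nicefrac{n}{k}$ supporters) is hired if and only if it lands in the selection phase, and then compute the probability that enough demanding candidates avoid the exploration window --- which would even yield the stronger bound $(1-\nicefrac{1}{e})^{k-p}$. The paper argues per candidate instead: an offline-MES winner arriving at position $i>t$ always survives the subroutine call on $C_R^r\cup\{c\}$, but is actually \emph{hired} only if the evicted $\hat{c}$ happens to lie in $C_R$, an event the paper bounds by the classical secretary probability $\frac{t}{i-1}$; summing over $i$ gives $\frac{t}{m}\ln\frac{m}{t}\geq \nicefrac{1}{e}$ per candidate and $(\nicefrac{1}{e})^{k-p}$ overall.

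The gap is exactly the invariant you flag as the main obstacle, and as stated it is false. The reference committee must be completed to size $k$ (otherwise $C_R^r\setminus W'$ is empty or contains several candidates and the update is ill-defined), and the completion step of the MES call on $C_R^r\cup\{c\}$ can admit a \emph{non-demanding candidate with positive support}: such a $c$ displaces a zero-utility padding member of $C_R$, hence passes the test $\hat{c}\in C_R$, is hired, and thereafter sits in $C_R^r$ without belonging to $C_R$. A later demanding candidate may then evict precisely this interloper, fail the test $\hat{c}\in C_R$, and be rejected even though it arrived in the selection phase and is retained by the MES subroutine. So ``demanding and in the selection phase $\Rightarrow$ hired'' does not hold deterministically; whether the evicted $\hat{c}$ lies in $C_R$ is a random event, and this is where the factor $\frac{t}{i-1}$ (hence $\nicefrac{1}{e}$ rather than $1-\nicefrac{1}{e}$) enters the paper's proof. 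Your final probability computation is otherwise clean --- arguably cleaner than the paper's implicit independence across the $k-p$ per-candidate events --- but it is computing the probability of the wrong event, so the structural lemma would have to be repaired (e.g., by restricting to instances where every non-demanding candidate has zero support) or replaced by the paper's probabilistic accounting of $\hat{c}$.
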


\begin{proof}
Regarding feasibility, we note that a candidate $c$ is hired only if, at the time of consideration, it manages to replace a candidate $\hat{c} \in C_R$ currently in $C_R^r$. Since $C_R$ is a feasible outcome of Offline MES, we have that $|C_R| \leq k$. Therefore, at most $k$ candidates can be removed from $C_R^r$ during the execution of the algorithm, and thus no more than $k$ candidates can be hired. The final step of the rule ensures that the output always contains exactly $k$ candidates.

We now turn to the proportionality guarantee. As a first step, we show that each candidate that would have been selected by Offline MES is hired by Online MES with probability at least $\nicefrac{1}{e}$. Fix an arbitrary $t$ and say that $c$ is a candidate that is hired under Offline MES and arrives at time $i>t$ in Online MES. Moreover say that $C_R^r$ is the running sample at the moment of appearance of $c$.
    
First, we need to show that $c$ will be included in the winning outcome of MES when the available candidates are $C_R^r \cup \{c\}$. Towards a contradiction, suppose that $c$ is not included in the winning outcome. This could happen either because the supporters of $c$ do not have enough budget to afford $c$, or because there are $k$ other candidates, say a set $C^*,$ that are $\rho$-affordable at a lower value of $\rho$. The first case cannot occur because the supporters of $c$ can only spend money towards candidate $c$, due to the single approval assumption. In the second case, all candidates in $C^*$ must have been $\rho$-affordable at the same values of $\rho$ even when considering the full set of candidates. This is because, under the single approval assumption, each voter can only contribute to the single candidate they approve. Therefore, the outcome of MES would have been $C^*$ or would consist of candidates that are $\rho$-affordable for even lower values of $\rho$ than the candidates in $C^*$, contradicting the fact that $c$ is a MES-winning candidate.
    
Under online-MES, $c$ will be selected only if the candidate $\hat{c}$ of $C_R^r$ that is being kicked out of the MES-winning committee by $c$ at the moment of consideration of $c$ was in the reference set $C_R$ found during the sampling phase. Since we have assumed uniformly random permutation of candidates, this happens with probability $\frac{t}{i-1},$ since $i-1$ spots of appearance of $\hat{c}$ exist before the appearance of $c$ and only $t$ out of them are leading to the selection of $c$. Moreover, each step $i$ is equally likely for $c$ to be spotted then (with probability of $\nicefrac{1}{m}$), so the probability of selecting $c$ is 
\setlength{\abovedisplayskip}{2pt}
\setlength{\belowdisplayskip}{2pt}
\begin{equation*}
\sum_{t+1}^m \frac{1}{m}\frac{t}{i-1}=\frac{t}{m}\sum_{t+1}^m \frac{1}{i-1}>\frac{t}{m} \ln{\frac{m}{t}}.
\end{equation*}
The function $\frac{t}{m} \ln{\frac{m}{t}}$ is being maximized for $t=\nicefrac{m}{e}$. Hence, for $t=\lfloor \nicefrac{m}{e} \rfloor$, with a probability of at least $\nicefrac{1}{e},$ the candidate $c$ is hired by Online MES. 
If all the $k$ candidates of the MES-winning committee are in the winning committee of Online MES, then its outcome satisfies EJR. This is a consequence of the fact that MES satisfies EJR for multi-winner elections 
\citep{peters2021proportional} and happens with a probability of $(\nicefrac{1}{e})^k.$ Moreover, to satisfy EJR up to $p$ candidates, it suffices to elect only $k-p$ candidates from the outcome of MES, and this happens with a probability of $(\nicefrac{1}{e})^{k-p}.$ 
\end{proof}

The proof of \Cref{thm:mes} directly extends to further guarantees satisfied by Offline MES, including EJR+ 
\citep{brill2023robust} and other notions of representation (see the book by \citet{lackner2023chap4} for more). EJR+ is one of the strongest satisfiable proportionality axioms known in the literature, and strictly stronger than EJR. It is violated by a candidate not in the winning set if they are approved by $\ell\frac{n}{k}$ voters, each of whom approves less than $\ell$ members of the outcome.\looseness-1

Probabilistically guaranteeing proportionality axioms (via deterministic rules) is a conceptually appealing and novel approach. Online MES is the first rule providing such bounded guarantees; whether analyses improving the one in \Cref{thm:mes} are possible remains open. It is also the first classic offline voting rule adapted to the online setting.  
Yet, the main appeal of the method is explained in the next paragraph.

\paragraph{A Framework for Bringing Voting Rules Online \& the Case of MES with Bounded Overspending}
The principle behind Online Method of Equal Shares has a broader applicability. We highlight that its mechanics are not specific to MES, and can be adapted to work with any voting rule. 
In that sense, our method can be seen as an instance of \textit{a general framework that converts any rule for offline elections into one that works in the online setting}.
MES is the most widely known and used voting method for achieving proportional outcomes.
However, it is not without drawbacks. As discussed in the very recent work by \citet{papasotiropoulos2024method}, there are natural cases where MES behaves in counterintuitive ways. One such case arises precisely due to the use of cardinal ballots---which is exactly the focus of our work.
A simple but insightful example, known as the “weakness of tail utilities,” captures this issue well. 
To address limitations of the Method of Equal Shares,
\citet{papasotiropoulos2024method} introduced the Method of Equal Shares with Bounded Overspending (BOS).
BOS is considered better suited for practical applications than MES, with the council of Pruszków, Poland, already adopting it for an election featuring the largest budget to date allocated using a voting rule with theoretical fairness guarantees.
The approach we proposed in \Cref{sec:mes} is fully compatible with BOS as well and it can be applied using BOS as the underlying rule instead of MES. 
In particular, it suffices to execute \Cref{alg:online-mes} using BOS instead of MES in the step that creates the reference committee as well as for the computation of $W'$.
Therefore, in our experiments, (\Cref{sec:exp}) we also consider \emph{Online BOS} and we show that it exhibits exceptionally good behavior.

\subsection{Online Nash Product Rule}
\label{sec:nash}
Given an offline election, a natural first objective is to maximize the total voter utility, hence elect the candidates in $\arg \max_{W\in \mathcal{W}}\sum_{i\in V}u_i(W).$
As observed by \citet{Do}, the approval-based variant of this problem essentially matches the multi-secretary problem studied by \citet{kleinberg2005multiple}. The online algorithm proposed there applies directly to the setting with cardinal ballots as well, yielding a constant competitive ratio against the offline optimum.
However, this objective entirely overlooks fairness among voters. It values all gains equally, without considering who benefits. 
To address this, a more fairness-aware objective is to \emph{maximize the product of utilities}. This approach promotes balanced distribution of voter satisfaction and,
here, fairness is understood as ensuring equitable utility among voters, rather than proportional representation as captured by EJR and related axioms discussed in \Cref{sec:certainty,sec:mes}. This 
objective is known as the \emph{Nash Product rule}. Formally, it is defined as $\arg\max_{W \in \mathcal{W}} \prod_{i \in V} u_i(W)$ \citep{nash1950bargaining}, or, equivalently in terms of maximizers,  
$\arg\max_{W \in \mathcal{W}} \sum_{i \in V} \log(1 + u_i(W)),$ 
where the addition of $1$ ensures that all terms are well-defined, even when some voters have a utility of zero \citep{fluschnik2019fair}. We call \emph{Nash Welfare} the quantity $\Pi(W):=\sum_{i \in V} \log(1 + u_i(W))$.
This and related formulations of the Nash Product rule are well-studied in computational social choice
\citep{caragiannis2019unreasonable,freeman2017fair,conitzer2017fair,fain2018fair,ebadian2024optimized}. 
Considering a utility-product formulation in the online setting is an interesting direction for future work which would likely require a different approach than the one we propose here.

We introduce this rule to a new application domain: the fair selection of candidates in an online secretary hiring framework.
From the original axiomatic characterization \citep{nash1950bargaining}, we can derive that it is the only one satisfying several important axioms—each natural and desirable in our setting, even when fairness is not the main objective. According to this characterization, a more equal distribution of satisfaction among voters is better, increasing a voter's satisfaction without decreasing that of others is also an improvement, and the optimal outcome remains unchanged under scaling of valuations;
we refer to the work of \citet{kilgourhandbook} for details and to the papers by \citet{ramezani2009nash} and \citet{lu2024approval} for further properties.\looseness-1

The Nash Product rule can be seen as an analog of PAV \citep{fluschnik2019fair}, a rule with strong guarantees of proportionality in the approval setting \citep{lackner2023chap4}
which does not extend directly to settings with cardinal utilities.
 On the negative side, likewise PAV, computing a committee that maximizes the Nash Welfare is NP-hard in the offline setting \citep{ramezani2009nash}. 
 The committee returned by our method, which runs in polynomial time, achieves a Nash Welfare within a multiplicative factor of roughly $\nicefrac{1}{11}$ compared to the offline optimal.

The algorithm we propose, to be called \emph{Online Nash Rule} works as follows.
We divide the input stream into $k$ consecutive and equally-sized\footnote{We assume without loss of generality that $m$ is a multiple of $k$; otherwise, we can add dummy candidates with no support arriving in random order.} segments, with the goal of selecting one candidate per segment. Intuitively, in each segment we aim to select the best available candidate based on the contributions to the overall Nash Welfare.
Let $T_0 = \emptyset$ and, for $i\in \{1,2,\dots,k\}$ let $T_i$ denote the set of candidates selected by the time a decision has been made for the last candidate of the $i$-th segment. For each segment $i \in \{1, \dots, k\}$, we define the marginal gain function $g_i(c) = \Pi(T_{i-1} \cup \{c\})$, which measures the value of adding the candidate $c$ to the set of already selected candidates with respect to the function $\Pi$ that denotes the Nash Welfare. Our objective in segment $i$ is to select the candidate maximizing $g_i$.
To do so, we apply the classical algorithm for the secretary problem within each segment \citep{dynkin1963optimum}.
Specifically, for the $i$-th segment, we observe the first $\lfloor\nicefrac{1}{e}\rfloor$-fraction of candidates of the segment, say $R_i$, without making any selection. Let $\hat{c}_i:=\arg\max_{c\in R_i}g_i(c)$. Then, we include in $T_i$ the first candidate that comes after this observation phase and has $g_i(c) \geq g_i(\hat{c}_i)$. If no such candidate arrives, we select the last candidate of the segment.
The final selected committee is $T_k$. \Cref{alg:online-nash} presents a pseudocode of the method.

\begin{algorithm}[t]
    \caption{Online Nash Rule}
    \label{alg:online-nash}
    \DontPrintSemicolon
    
    \KwIn{$E=(C,V,k,\{u_i\}_{i\in V})$}
    
    $m \gets |C|$, $n \gets |V|$ \;
    Partition $C$ into $k$ consecutive equally-sized segments $S_1,\dots,S_k$ according to the arrival order\;
    $T_0 \gets \emptyset$
    
    \For{$i \gets 1$ \KwTo $k$}{
        $R_i \gets$ first $\lfloor |S_i|/e \rfloor$ candidates of $S_i$\tcp*{observation phase of the $i$-th segment}
        
        \ForEach{$c \in R_i$}{
            $g_i(c) \gets \Pi(T_{i-1} \cup \{c\})$\;
        }
        
        $\hat{c}_i \gets \arg\max_{c \in R_i} g_i(c)$\;
        
        \ForEach{remaining candidate $c$ in $S_i \setminus R_i$ in order}{
            $g_i(c) \gets \Pi(T_{i-1} \cup \{c\})$\;
            \If{$g_i(c) \geq g_i(\hat{c}_i)$}{
                $c^\star \gets c$\tcp*{hire $c$ as long as it is better than the sample's best}
                \textbf{break}\;
            }
            \If{$c$ is the last candidate of $S_i$}{
                $c^\star \gets c$\tcp*{fallback: hire the last candidate of the segment}
            }
        }
        
        $T_i \gets T_{i-1} \cup \{c^\star\}$\tcp*{election of the $i$-th member of the committee}
    }
    
    \Return $T_k$\;
\end{algorithm}

\begin{theorem}
\label{thm:nash}
    Online Nash Rule returns a feasible solution and the expected value of its outcome is at least $\frac{1-\nicefrac{1}{e}}{7}$ of the offline maximum Nash Welfare. 
\end{theorem}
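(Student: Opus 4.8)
The plan is to exploit that the Nash Welfare $\Pi$ is a normalized, monotone, \emph{submodular} set function and to read Online Nash Rule as an instance of the submodular secretary problem. For each voter $i$, the map $W\mapsto \log(1+u_i(W))$ is a nondecreasing concave function of the nonnegative modular quantity $u_i(W)=\sum_{j\in W}u_i(j)$, hence monotone submodular; summing over $V$ shows $\Pi$ is monotone submodular with $\Pi(\emptyset)=0$. Feasibility is immediate: in each of the $k$ segments the algorithm commits to exactly one candidate (the first to beat the observation-phase record, or the last candidate as a fallback), the segments are disjoint, and by the dummy-padding assumption each is nonempty, so $|T_k|=k$.

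For the competitive bound I would argue in two layers. The \textbf{inner layer} is the within-segment secretary guarantee. Conditioning on everything determined before segment $i$ (the full partition and the prefix $T_{i-1}$), the score $g_i(c)=\Pi(T_{i-1}\cup\{c\})$ induces a fixed ranking of the candidates of $S_i$, the arrival order inside $S_i$ is uniform, and the $\lfloor|S_i|/e\rfloor$-threshold rule selects the $g_i$-maximizer of $S_i$ with probability at least $1/e$---this is exactly the estimate $\frac{t}{m}\ln\frac{m}{t}$, optimized at $t=m/e$, already used in the proof of \Cref{thm:mes}. Because marginal gains are nonnegative, this gives, writing $\Delta_i=\max_{c\in S_i}\bigl(\Pi(T_{i-1}\cup\{c\})-\Pi(T_{i-1})\bigr)$, the bound $\mathbb{E}\bigl[\Pi(T_i)-\Pi(T_{i-1})\bigr]\ge \tfrac{1}{e}\,\mathbb{E}[\Delta_i]$: each segment captures in expectation at least a $1/e$ share of the best marginal gain available to it.

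The \textbf{outer layer}, and the main obstacle, is to show that these best-marginal selections, spread across the randomly formed segments, recover a constant fraction of the offline optimum $O=\argmax_{W\in\mathcal{W}}\Pi(W)$. Two ingredients combine here. Submodularity yields, for every realized prefix, $\Pi(O)-\Pi(T_{i-1})\le \sum_{o\in O}\bigl(\Pi(T_{i-1}\cup\{o\})-\Pi(T_{i-1})\bigr)$, so a segment that happens to contain an optimal element has a large $\Delta_i$. The uniform arrival order makes each optimal element land in a uniformly random segment, so I would take expectations over the partition---conditioning on the partition of the already-processed segments and using that a still-unassigned optimal element falls into the current segment with probability $\tfrac{1}{k-i+1}$---to turn the per-segment inequalities into a telescoping lower bound on $\mathbb{E}[\Pi(T_k)]$ in terms of $\Pi(O)$. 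The $(1-\nicefrac{1}{e})$ factor then appears exactly as in greedy submodular maximization.

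I expect the genuinely delicate part to be the accounting that produces the remaining constant: an optimal element is lost if it arrives during its segment's observation phase, and segments receiving several optimal elements ``waste'' the surplus, all while the prefix $T_{i-1}$---and hence the very marginal values the secretary subroutine optimizes---evolves stochastically from segment to segment. Controlling these losses cleanly (through a fixed coupling of optimal elements to segments and a summation of the resulting series, as in the submodular secretary analysis) is what compounds with the inner $1/e$ to yield the stated ratio $\frac{1-\nicefrac{1}{e}}{7}$.
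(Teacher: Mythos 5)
Your proposal is correct and follows essentially the same route as the paper: both reduce the theorem to the $(1-\nicefrac{1}{e})/7$ guarantee of \citet{bateni2013submodular} for the submodular secretary problem by verifying that the Nash Welfare $\Pi$ is monotone, nonnegative and submodular (your concave-of-modular composition argument is equivalent to the paper's direct computation of marginal gains). The paper invokes that guarantee as a black box, so your additional sketch of its inner/outer-layer analysis---which you yourself leave incomplete at the constant-$7$ accounting---is not needed.
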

\begin{proof}
\citet{bateni2013submodular} proposed an online algorithm that, given a monotone nonnegative submodular function $f$, returns a solution whose expected value under $f$ is within a multiplicative factor of $\frac{1 - \nicefrac{1}{e}}{7}$ of the optimum. The Online Nash Rule is simply an application of the rule by \citet{bateni2013submodular} with the Nash Welfare function $\Pi$ as the maximizing function $f$. Therefore, to prove the claimed guarantee it suffices to show that the Nash Welfare function is a monotone nonnegative submodular one. The first two properties are trivial by definition. It remains to prove submodularity. 
    
    We want to show that the function  
$
\Pi(W) = \sum_{i \in V} \log(1 + u_i(W))
$  
is submodular.  
Recall that a function $f: 2^C \mapsto \mathbb{R}_{\geq 0}$ is submodular if for all $A \subseteq B \subseteq C$ and every $c \in C \setminus B$, it holds
$
f(A \cup \{c\}) - f(A) \geq f(B \cup \{c\}) - f(B).
$
 We want to show that this holds for $\Pi$. Fix any sets of candidates $A,B$ such that $A\subseteq B$ and a candidate $c\notin B$. Then, for both $X=A$ and $X=B$ it holds that
\begin{align*}
\Pi(X\cup\{c\}) - \Pi(X) &=
\sum_{i \in V} \left[ \log(1 + u_i(X \cup \{c\})) - \log(1 + u_i(X)) \right] =\\
\sum_{i \in V} \left[ \log(1 + u_i(X)+u_i(c)) - \log(1 + u_i(X)) \right] &=
\sum_{i \in V} \log \frac{1 + u_i(X)+u_i(c)}{1 + u_i(X)}= \sum_{i \in V} \log(1+\frac{u_i(c)}{1 + u_i(X)}).
\end{align*}

Fix an arbitrary voter $i\in V$. Since $A\subseteq B$ it holds that $u_i(A)\leq u_i(B)$. Therefore, $\frac{1}{u_i(A)}\geq \frac{1}{u_i(B)},$ and the hence we get
$\Pi(A\cup\{c\}) - \Pi(A) \geq \Pi(B\cup\{c\}) - \Pi(B)$. The feasibility of the outcome follows directly from the mechanics of the rule. 
\end{proof}

\section{Empirical Evaluation}
\label{sec:exp}
Finally, we experimentally study Greedy Budgeting, Online MES,\footnote{Offline MES is typically paired with a completion method to ensure the selection of exactly $k$ candidates \citep{faliszewski2023participatory,brill2024completing}. Several such methods have been proposed in the literature. Since Online MES relies on Offline MES as a subroutine, we had to choose a specific variant for our experiments---though this choice does not affect the theoretical guarantees of our paper. We tested both the Utilitarian and Add1U completion methods on small instances and found their outcomes to be largely similar. Given the already high running time of Online MES without completion, we focused on the Utilitarian variant for efficiency.
} Online BOS, and Online Nash rules on both real-world (Experiments 1 and 2) and synthetic datasets (Experiments 3 and 4), designing a set of simulations with distinct goals and evaluation metrics.
The key takeaways are that (i) all the proposed rules perform well in terms of fairness across a range of setups and metrics, (ii) when comparing them directly, Online BOS stands out for its consistency: it is the only rule among those we examined that is consistently either the best or close to it, (iii) Online BOS outperforms Online MES across all experiments and setups.
The code for replicating the experiments can be found at
\url{https://github.com/PishbinZein/FairnessInTheMultiSecretaryProblem_AAAI2026}.

\subsection[Pabulib Instances]{Experiment 1: Pabulib Instances}
We computed the outcomes of the proposed rules across 
all the $773$ participatory budgeting instances that were available in the \pabulib\ library \citep{faliszewski2023participatory} as of 22-08-2025. To allow the experiment to run in reasonable time, we only focused on instances of manageable size and constructed the dataset as follows. We used all approval-based instances (excluding those with $m \leq 2$) where the number of voters was not excessively large ($n \leq 50,\!000$). Among those, we selected instances where either both $n$ and $m$ are small ($n < 10,\!000$ and $m < 100$), or where $n$ is large ($10,\!000 < n < 50,\!000$) but $m$ is small ($m < 80$), or where $m$ is large ($m > 100$), but $n$ is small ($n < 5,\!000$). We then partitioned these instances into three groups based on the number of candidates $m$: instances with $m < 10$ were considered small (202 instances), those with $m \geq 30$ were considered large (396 instances), and the rest were classified as medium (175 instances).
For each instance, we consider $5$ different values of committee size, chosen as functions of $m$ to allow consistent comparison across datasets. Additionally, for each setting, we run $5$ iterations with candidate arrival orders drawn uniformly at random, in line with our theoretical analyses. 
To assess proportionality, we use EJR+ as its satisfaction (and degree of violation) can be verified efficiently.
In \Cref{tab:exp1app1,tab:exp1app2} we report the (average) share of voters who are not adequately represented and the number of candidates witnessing the violation, respectively, for various values of $k$, 
with instances broken down by the number of candidates.

\begin{table}[t]
\centering
\small
\resizebox{0.715 \linewidth}{!}{%
\begin{tabular}{lcccccc}
\toprule
 & $k = m/20$ & $k = m/15$ & $k = m/10$ & $k = m/7$ & $k = m/4$  \\
\midrule
\multicolumn{7}{c}{\textbf{Greedy Budgeting}} \\
\midrule
Small instances  & 0.0000 & 0.0000 & 0.0000 & 0.0137 & 0.0117  \\
Medium instances & 0.0242 & 0.0200 & 0.0188 & 0.0181 & 0.0156  \\
Large instances  & 0.0176 & 0.0171 & 0.0164 & 0.0155 & 0.0135  \\
All instances    & 0.0206 & 0.0188 & 0.0181 & 0.0167 & 0.0141  \\
\midrule
\multicolumn{7}{c}{\textbf{Online MES}} \\
\midrule
Small instances  & 0.0000 & 0.0000 & 0.0000 & 0.0137 & 0.0116  \\
Medium instances & 0.0236 & 0.0196 & 0.0188 & 0.0181 & 0.0157 \\
Large instances  & 0.0177 & 0.0172 & 0.0165 & 0.0157 & 0.0135  \\
All instances    & 0.0203 & 0.0186 & 0.0181 & 0.0168 & 0.0142 \\
\midrule
\multicolumn{7}{c}{\textbf{Online BOS}} \\
\midrule
Small instances  & 0.0000 & 0.0000 & 0.0000 & 0.0137 & 0.0116  \\
Medium instances & 0.0236 & 0.0196 & 0.0185 & 0.0180 & 0.0155  \\
Large instances  & 0.0176 & 0.0171 & 0.0163 & 0.0155 & 0.0133  \\
All instances    & 0.0203 & 0.0186 & 0.0179 & 0.0167 & 0.0140 \\
\midrule
\multicolumn{7}{c}{\textbf{Online Nash Rule}} \\
\midrule
Small instances  & 0.0000 & 0.0000 & 0.0000 & 0.0138 & 0.0117  \\
Medium instances & 0.0226 & 0.0190 & 0.0194 & 0.0187 & 0.0164  \\
Large instances  & 0.0181 & 0.0177 & 0.0170 & 0.0161 & 0.0140  \\
All instances    & 0.0201 & 0.0185 & 0.0186 & 0.0172 & 0.0147  \\
\bottomrule
\end{tabular}%
}
\caption{Average EJR+ Violations by Rule and Committee Size (Experiment 1). Values represent percentages of voters witnessing a villation, where 0 corresponds to 0\% and 1 to 100\%}
\label{tab:exp1app1}
\end{table}

\begin{table}[t]
\centering
\small
\resizebox{0.715 \linewidth}{!}{%
\begin{tabular}{lcccccc}
\toprule
 & $k = m/20$ & $k = m/15$ & $k = m/10$ & $k = m/7$ & $k = m/4$  \\
\midrule
\multicolumn{7}{c}{\textbf{Greedy Budgeting}} \\
\midrule
Small instances  & 0.0000 & 0.0000 & 0.0000 & 0.3656 & 0.5192  \\
Medium instances & 0.2257 & 0.2423 & 0.3388 & 0.4915 & 0.9038  \\
Large instances  & 0.3380 & 0.4883 & 0.7399 & 1.0469 & 1.7986  \\
All instances    & 0.2877 & 0.3436 & 0.4617 & 0.6138 & 1.0142  \\
\midrule
\multicolumn{7}{c}{\textbf{Online MES}} \\
\midrule
Small instances  & 0.0000 & 0.0000 & 0.0000 & 0.3566 & 0.5115  \\
Medium instances & 0.2228 & 0.2375 & 0.3305 & 0.4800 & 0.9079  \\
Large instances  & 0.3313 & 0.4800 & 0.7377 & 1.0623 & 1.8746  \\
All instances    & 0.2827 & 0.3374 & 0.4553 & 0.6097 & 1.0319 \\
\midrule
\multicolumn{7}{c}{\textbf{Online BOS}} \\
\midrule
Small instances  & 0.0000 & 0.0000 & 0.0000 & 0.3566 & 0.5088  \\
Medium instances & 0.2228 & 0.2375 & 0.3298 & 0.4781 & 0.8939 \\
Large instances  & 0.3306 & 0.4775 & 0.7254 & 1.0241 & 1.7721  \\
All instances    & 0.2823 & 0.3363 & 0.4511 & 0.5988 & 1.0003  \\
\midrule
\multicolumn{7}{c}{\textbf{Online Nash Rule}} \\
\midrule
Small instances  & 0.0000 & 0.0000 & 0.0000 & 0.3709 & 0.5408  \\
Medium instances & 0.2296 & 0.2444 & 0.3424 & 0.5180 & 1.0003  \\
Large instances  & 0.3478 & 0.5104 & 0.7995 & 1.1666 & 2.0922 \\
All instances    & 0.2948 & 0.3540 & 0.4825 & 0.6609 & 1.1375 \\
\bottomrule
\end{tabular}%
}
\caption{Average Shortfall of EJR+ Violations by Rule and Committee Size (Experiment 1).}
\label{tab:exp1app2}
\end{table}

Overall, the experiment showcases that all of our rules perform very well with respect to EJR+ violations: the violating voters percentage is only marginally different across rules and there are around 1\% to 2.5\% of voters who are underrepresented on average; the average shortfall indicates that all rules either satisfy EJR+ up to one candidate (see \Cref{def:ejr_approx}) or come very close to doing so. As $k$ increases, the percentage of violating voters decreases, while the shortfall increases.
The objective of this experiment is not to draw comparisons between the rules; the subsequent experiments in this section are devoted to comparative analyses.
Finally, in \Cref{tab:exp1runtime}, we report the average running time\footnote{The experimental part of our work was implemented in Python and run on an ASUS VivoBook laptop with an Intel i7‑11370H processor, 16GB of RAM, and a 512GB PCIe SSD.} per instance for each rule, showcasing a noticeable difference among them---we briefly return to it in \Cref{sec:conclusion}.

\begin{table}[h!]
\centering
\resizebox{0.75 \linewidth}{!}{%
\begin{tabular}{lc}
\toprule
\textbf{Rule} & \textbf{Time (seconds)} \\
\midrule
Greedy Budgeting Rule  & 0.0181 \\
Online Method of Equal Shares          & 1.1999 \\
Online Method of Equal Shares with Bounded Overspending          & 6.7802 \\
Online Nash Rule    & 0.3237 \\
\bottomrule
\end{tabular}%
}
\caption{Average running time per instance from \pabulib, for each rule (Experiment 1).}
\label{tab:exp1runtime}
\end{table}

\subsection[Sushi and MovieLens Datasets]{Experiment 2: Sushi and MovieLens Datasets}
A limitation of the datasets used in Experiment 1 is that they come from processes where candidates are project proposals with associated costs---which we had to ignore since our rules are not designed for cost-based settings. 
Additionally, since Experiment 1 focused on the EJR+ metric which is only defined for approval preferences, the datasets used involve only binary preferences. Our second experiment utilizes two datasets which avoid these limitations and are standard benchmarks in computational social choice and beyond. 

The Sushi dataset \citep{kamishima2003nantonac, kamishima2010survey} contains 5,000 individual rankings of 100 items grouped into 5 preference tiers with some items left unranked, which we interpret as scores from 0 to 5. We evaluate 9 different values of $k$; specifically $k\in \{2,5,8,15,20,25,35,40,45\}$.
For each, we run 20 iterations of each algorithm over random arrival orders of candidates. 

The MovieLens dataset we used \citep{harper2015movielens} contains 100,000 movie ratings where 1000 users are giving scores from $0$ to $5$ on 1700 movies. The same values of $k$ were considered for this dataset. We run 50 iterations of each algorithm, each with a random arrival order.
The two datasets differ significantly in the number of candidates, a deliberate choice to provide insights into rules' behavior under different candidates' set sizes.

The goal here is to analyze the relative performance of the rules under varying committee sizes.  
We evaluate performance using four metrics: average voter satisfaction, percentage of voters with a satisfaction of $0$ (exclusion ratio), average satisfaction of the least satisfied quartile of voters, and the Gini coefficient capturing inequality in satisfaction.
In \Cref{tab:exp21} we report, for each metric, the percentage that a rule ranks first, or among the two best or last among the examined rules, for different committee sizes, for the Sushi dataset. \Cref{tab:exp22} report the results for the MovieLens dataset.

For the Sushi dataset, Greedy Budgeting is most often the top-performing rule across all the examined metrics, but there are also multiple instances where it fails to rank among the
top two. Online BOS, on the other hand, appears more consistently reliable: for three metrics it ranks among the two best significantly more frequently than Greedy, and even for the 25th percentile metric---where it is slightly behind---it still does so in the vast majority of instances. The MovieLens dataset is considerably more favorable to Online BOS, which ranks first in far more instances than any other rule.
Online MES also sees a notable increase in the percentage of instances where it ranks among the top two. These shifts can be explained by the increase in the number of available candidates (from 100 in the Sushi dataset to 1700 in MovieLens), which, with $k$ fixed, inherently favors Online MES and BOS by design.

\begin{table}[t]
\centering
\resizebox{\linewidth}{!}{%
\begin{tabular}{@{}c|c|c|c|c@{}}
\toprule
\textbf{Best | Top-2 | Worst} & \textbf{Avg Sat} & \textbf{Excl Ratio} & \textbf{25th Perc} & \textbf{Gini} \\
\midrule
\textbf{Greedy Budgeting} & 47.78\% | 72.22\% | 3.33\% & 56.11\% | 74.44\% | 3.33\% & 64.44\% | 91.67\% | 0.00\% & 48.89\% | 73.89\% | 3.33\% \\
\textbf{Online MES} & 8.33\% | 32.78\% | 3.89\% & 8.89\% | 33.33\% | 5.00\% & 4.44\% | 27.78\% | 2.78\% & 8.89\% | 30.56\% | 5.00\% \\
\textbf{Online BOS} & 43.33\% | 93.89\% | 1.11\% & 34.44\% | 90.56\% | 1.11\% & 31.11\% | 80.56\% | 0.00\% & 41.67\% | 94.44\% | 1.11\% \\
\textbf{Online Nash Rule} & 0.56\% | 1.11\% | 91.67\% & 0.56\% | 1.67\% | 90.56\% & 0.00\% | 0.00\% | 97.22\% & 0.56\% | 1.11\% | 90.56\% \\
\bottomrule
\end{tabular}%
}
\caption{Percentage where each rule performed best, among the two best and worst, according to different metrics, for the Sushi dataset (Experiment 2).}
\label{tab:exp21}
\end{table}

\begin{table}[t]
\centering
\resizebox{\linewidth}{!}{%
\begin{tabular}{@{}c|c|c|c|c@{}}
\toprule
\textbf{Best | Top-2 | Worst} & \textbf{Avg Sat} & \textbf{Excl Ratio} & \textbf{25th Perc} & \textbf{Gini} \\
\midrule
\textbf{Greedy Budgeting} & 4.00\% | 22.89\% | 0.00\% & 11.78\% | 54.67\% | 0.00\% & 12.89\% | 56.67\% | 0.00\% & 14.67\% | 58.00\% | 0.00\% \\
\textbf{Online MES} & 8.67\% | 80.00\% | 0.67\% & 7.78\% | 48.22\% | 0.67\% & 3.56\% | 53.56\% | 0.00\% & 6.67\% | 44.67\% | 0.67\% \\
\textbf{Online BOS} & 87.33\% | 96.67\% | 0.89\% & 80.44\% | 96.67\% | 0.89\% & 83.56\% | 89.78\% | 0.00\% & 78.67\% | 96.89\% | 0.89\% \\
\textbf{Online Nash Rule} & 0.00\% | 0.44\% | 98.44\% & 0.00\% | 0.44\% | 98.44\% & 0.00\% | 0.00\% | 100.00\% & 0.00\% | 0.44\% | 98.44\% \\
\bottomrule
\end{tabular}%
}
\caption{Percentage where each rule performed best, among the two best and worst, according to different metrics, for the MovieLens dataset (Experiment 2).}
\label{tab:exp22}
\end{table}

\subsection[Sampled Datasets]{Experiment 3: Sampled Datasets}
The two most commonly used models for generating synthetic data in computational social choice are Impartial Culture (IC) and Mallows model \citep{boehmer2024guide}.
We evaluate our rules using these and we also explore a recently proposed culture for synthetic preference generation that is well suited for experiments involving varying numbers of candidates: the Normalized Mallows model \citep{pmlr-v202-boehmer23b}. 

For IC, we select 5 fixed values of a parameter $p\in \{0.2, 0.4, 0.6, 0.8, 1.0\}$ and we say that each voter approves a number of candidates drawn from a binomial distribution centered around the fraction $p$ of $m$. Having fixed their number, the actual approved candidates are chosen at random. Then, positive utilities capped at 200 are assigned following a normal distribution with mean value 150 and standard deviation 140. We examine the values of $n$ from the set $\{5, 10, 20, 30, 50, 70, 100, 150\}$ and we also fix $32$ pairs of $(m,k)$ values selected so as for each $m$ to have both small, medium and large fraction of it as $k$. For each combination, we run $10$ iterations with different candidate arrival orders.

For Mallows model, we first generate ranked ballots in the usual way \citep{mallows1957non,boehmer2022expected,boehmer2021putting}, and then we assign utilities evenly spaced between 0 and 200 based on the ranking. We also add some utility noise across voters for achieving larger variance of scores in ballots. Again, we fix $n$, $m$, and $k$ as in the IC model and evaluate our rules on profiles with $5$ different dispersion parameters $\phi \in \{0.2, 0.4, 0.6, 0.8, 1\}$ and, again, $10$ different candidate arrival orders. 
Regarding the parameterization of the experiment for Normalized Mallows, we set it up in the exact same way as for the classic Mallows model and we follow the implementation from Python PrefSampling library \citep{prefsampling}.

Even in the well-studied offline setting of multi-winner elections, there is no single widely accepted metric for measuring proportionality. However, the Method of Equal Shares is highly regarded for its proportionality guarantees. We assess how much our rules deviate from the Offline MES outcomes based on the four statistics also used in Experiment 2. Specifically, we examined the average satisfaction of all voters and of the least satisfied quarter of voters as well as the Gini coefficient and exclusion ratio as measures of fairness. 
Naturally, given their online nature, our rules fall short compared to Offline MES, but the key question is by how much, and which rule performs best in minimizing this gap.

We first pay particular attention to the exclusion ratio metric for the Impartial Culture model against the approval probability $p$ (\Cref{fig:3a/5}, bottom-right). When voters approve many candidates, each with varying scores, all rules exhibit an exceptionally low exclusion ratio on average, and as the approval probability decreases, differences between the rules become apparent. Across all values of $p$, the exclusion ratio of all four rules remain within an additive gap of at most $2.5$ candidates from that of Offline MES. For low approval probabilities, BOS and Greedy perform nearly as well, being the top performers, with Online MES staying slightly behind. For $p=0.6$, Greedy becomes the worst rule among the examined, while BOS achieves again a guarantee particularly close to the one provided by Offline MES. 

In the same sampling model (IC), the Gini coefficient (\Cref{fig:3a/5}, top-left) of BOS is consistently the smallest of our proposed rules and the gap between each rule and Offline MES narrows as $n$ increases. 
In \Cref{fig:3b/5,fig:3c/5} (top-left), we also report the Gini coefficient of each rule’s outcome for the Mallows cultures, the values of which remain consistently particularly low for all rules, with their ranking following the same order as in the case of IC.

For the two satisfaction-based metrics (average and worst quartile), the conclusions are broadly similar: Online BOS consistently ranks highest, followed by Online MES with a clear margin over the remaining rules, and, as $n$ increases, the performance of all rules approaches that of Offline MES. These patterns hold across all three cultures (see also top-right and bottom-left of \Cref{fig:3a/5,fig:3b/5,fig:3c/5})
and can be observed also from the plots showing the 25th percentile utility in relation to the dispersion parameter of the Mallows models (bottom-right of \Cref{fig:3b/5,fig:3c/5}).

\begin{figure}[h!]
    \centering
    \begin{subfigure}{0.496\textwidth}
        \centering
        \includegraphics[width=\textwidth]{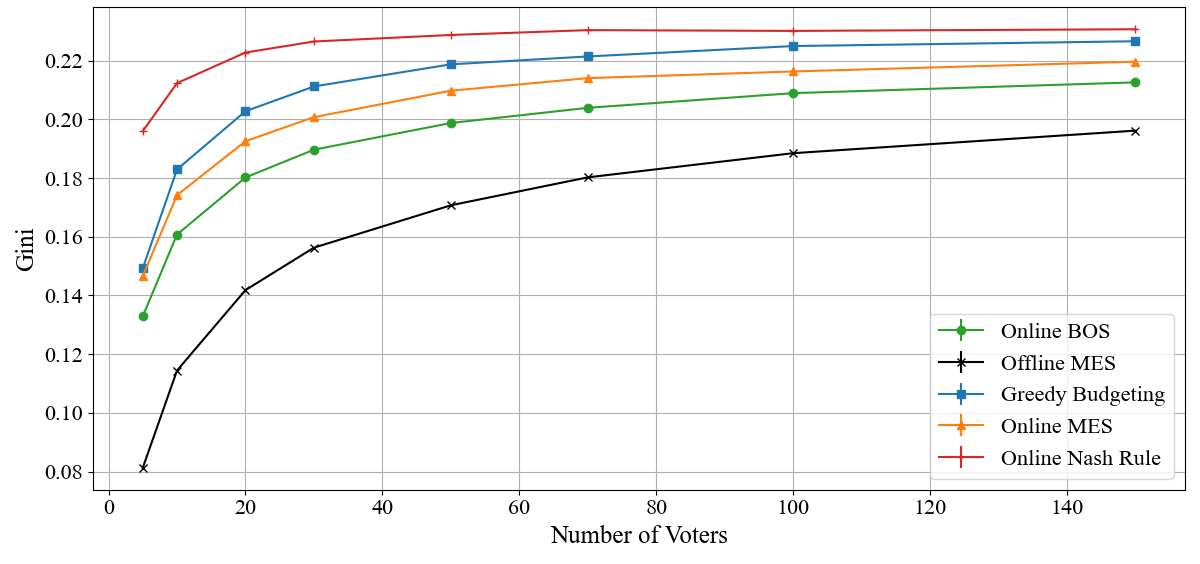}
    \end{subfigure}
    \hfill
    \begin{subfigure}{0.496\textwidth}
        \centering
        \includegraphics[width=\textwidth]{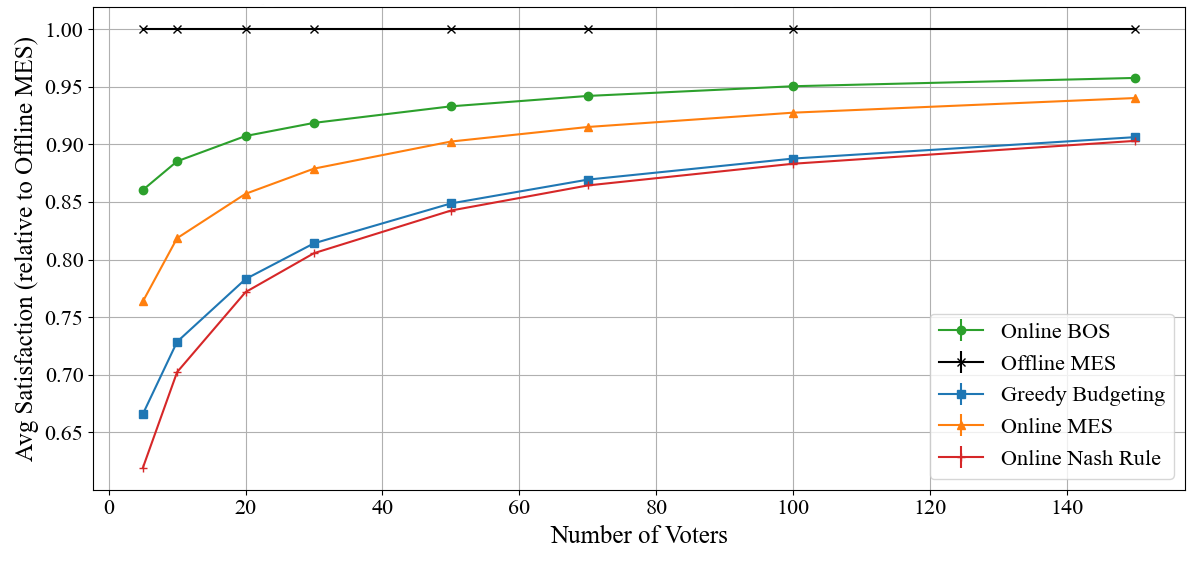}
    \end{subfigure}
    
    \vspace{0.3cm}
    
    \begin{subfigure}{0.496\textwidth}
        \centering
        \includegraphics[width=\textwidth]{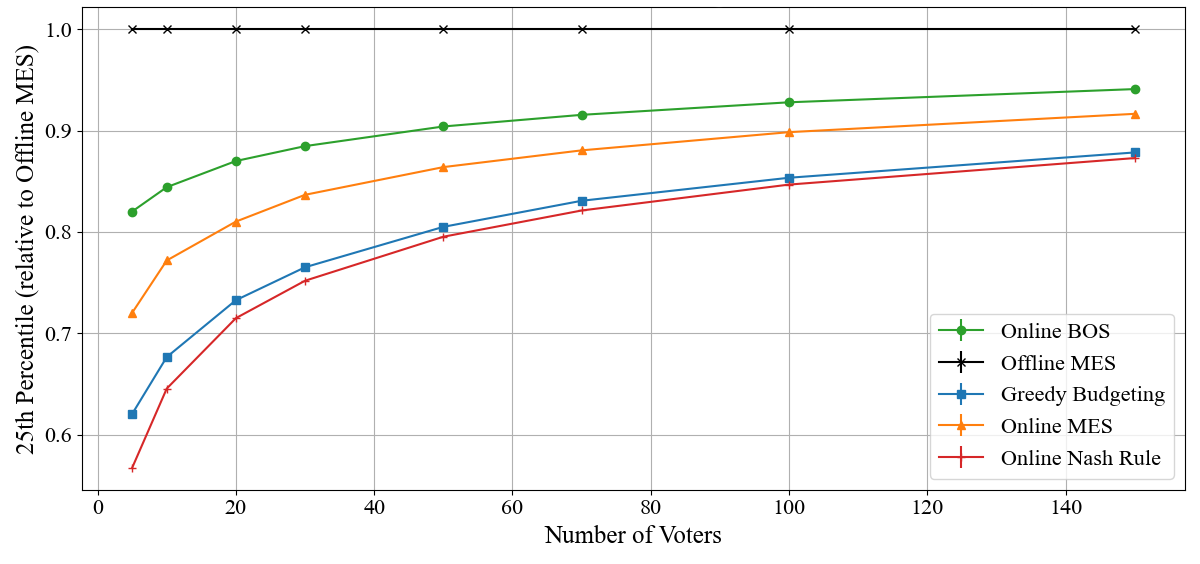}
    \end{subfigure}
    \hfill
    \begin{subfigure}{0.496\textwidth}
        \centering
        \includegraphics[width=\textwidth]{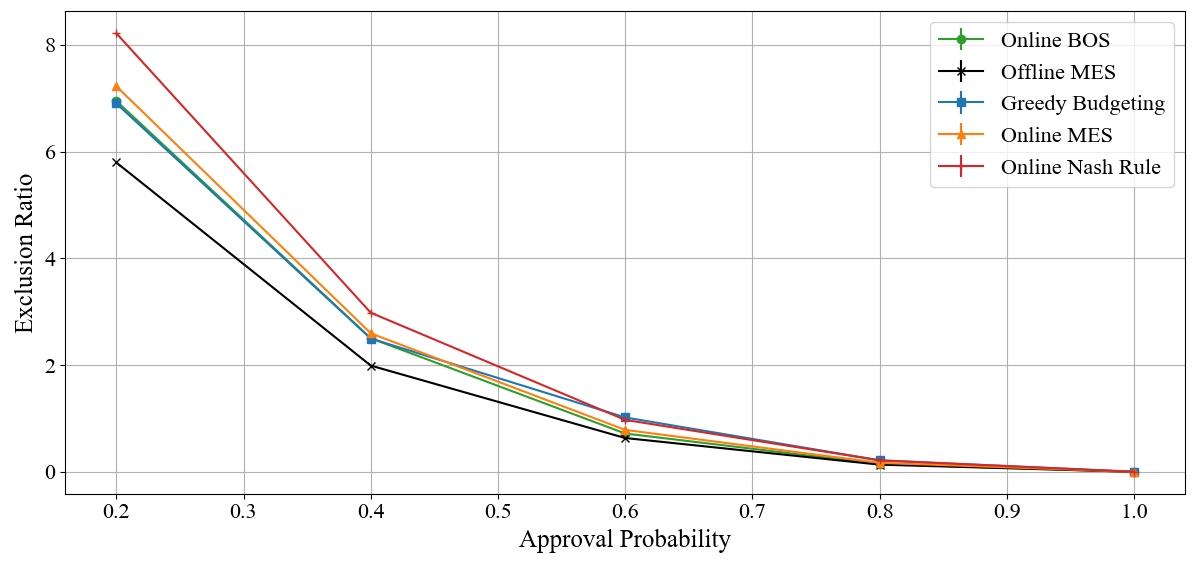}
    \end{subfigure}
    \caption{Gini coefficient against the number of voters (top-left), average satisfaction (top-right) and 25th percentile (bottom-left) relative to Offline MES against the number of voters, and Exclusion ratio against the approval probability $p$ (bottom-right) in the IC model (Experiment 3).}
    \label{fig:3a/5}
\end{figure}
\begin{figure}[h!]
    \centering
    \begin{subfigure}{0.496\textwidth}
        \centering
        \includegraphics[width=\textwidth]{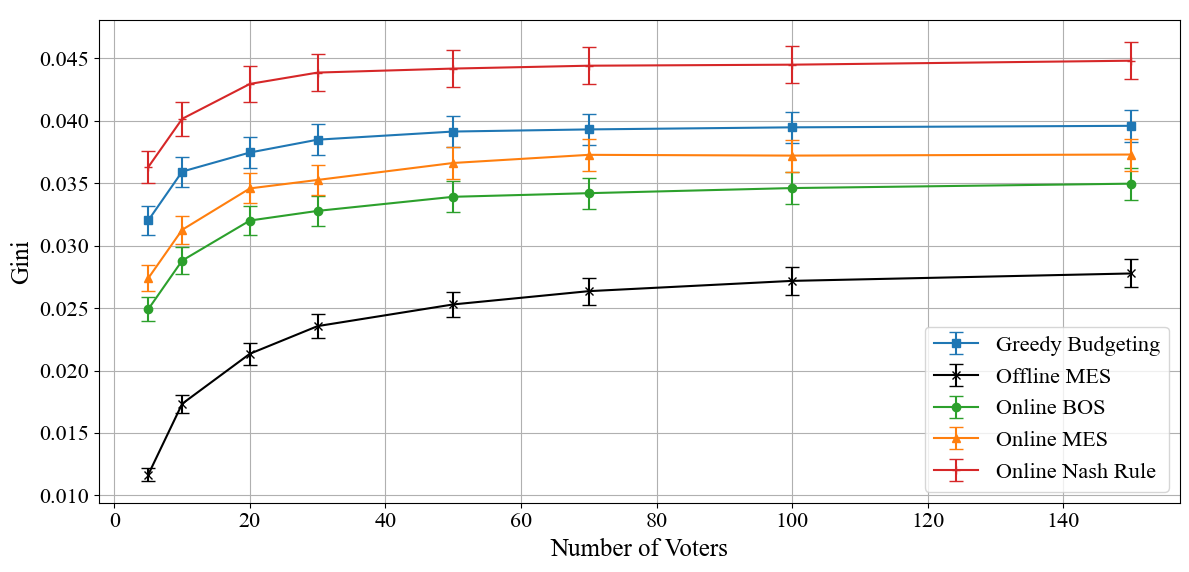}
    \end{subfigure}
    \hfill
    \begin{subfigure}{0.496\textwidth}
        \centering
        \includegraphics[width=\textwidth]{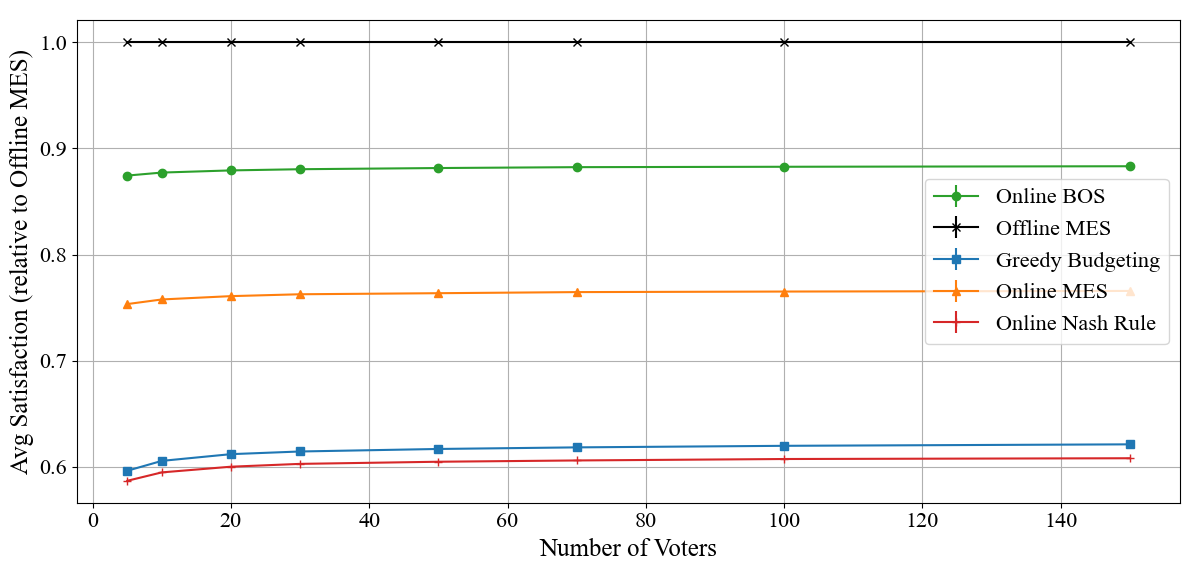}
    \end{subfigure}
    
    \vspace{0.5cm}
    
    \begin{subfigure}{0.496\textwidth}
        \centering
        \includegraphics[width=\textwidth]{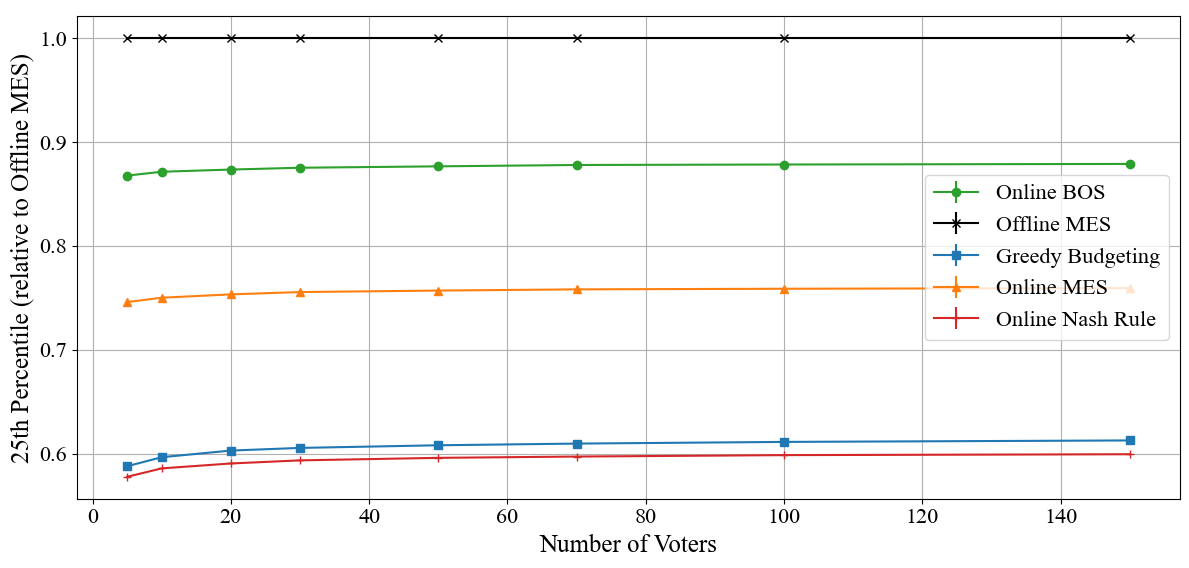}
    \end{subfigure}
    \hfill
    \begin{subfigure}{0.496\textwidth}
        \centering
        \includegraphics[width=\textwidth]{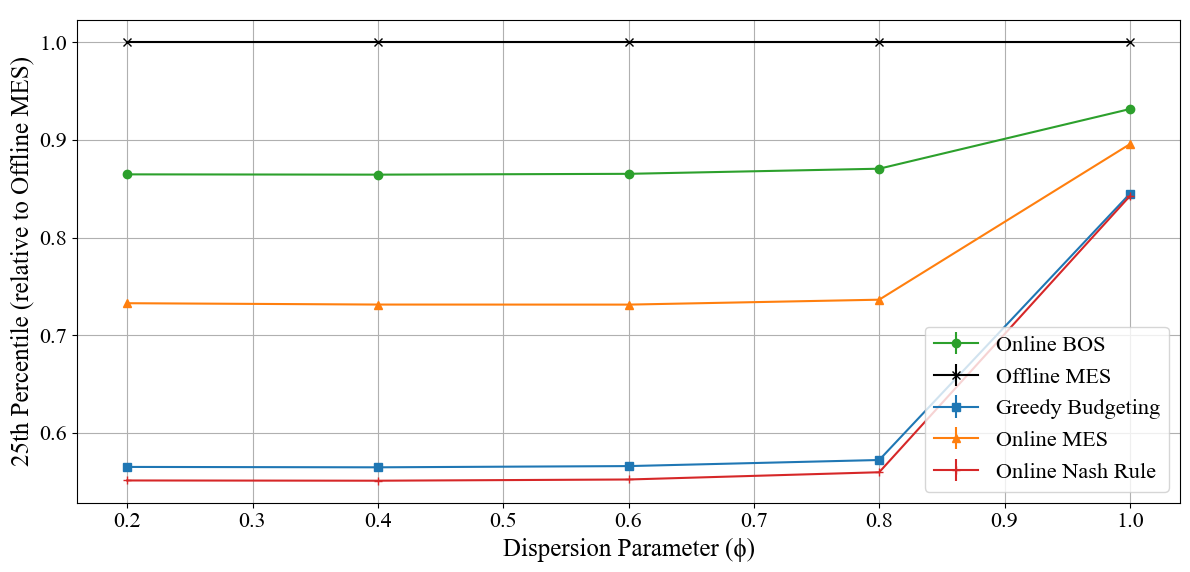}
    \end{subfigure}
    \caption{Gini coefficient against the number of voters (top-left), average satisfaction (top-right) and 25th percentile (bottom-left) relative to Offline MES against the number of voters, and 25th percentile relative to Offline MES against the dispersion parameter $\phi$ (bottom-right)
    in the Mallows model (Experiment 3).}
\label{fig:3b/5}
\end{figure}

\begin{figure}[h!]
    \centering
    \begin{subfigure}{0.496\textwidth}
        \centering
        \includegraphics[width=\textwidth]{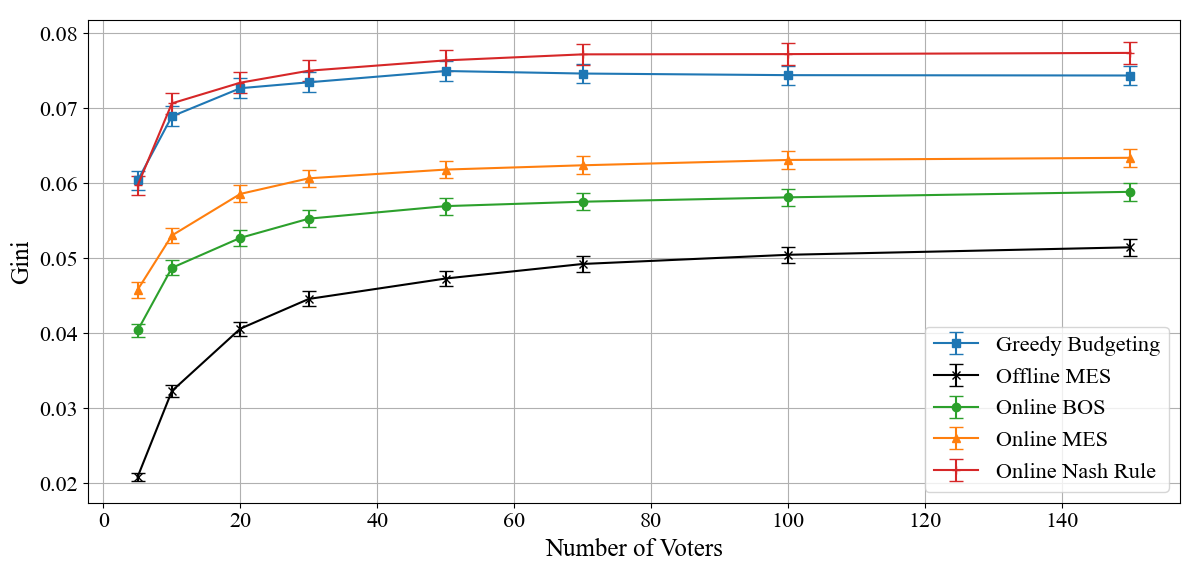}
    \end{subfigure}
    \hfill
    \begin{subfigure}{0.496\textwidth}
        \centering
        \includegraphics[width=\textwidth]{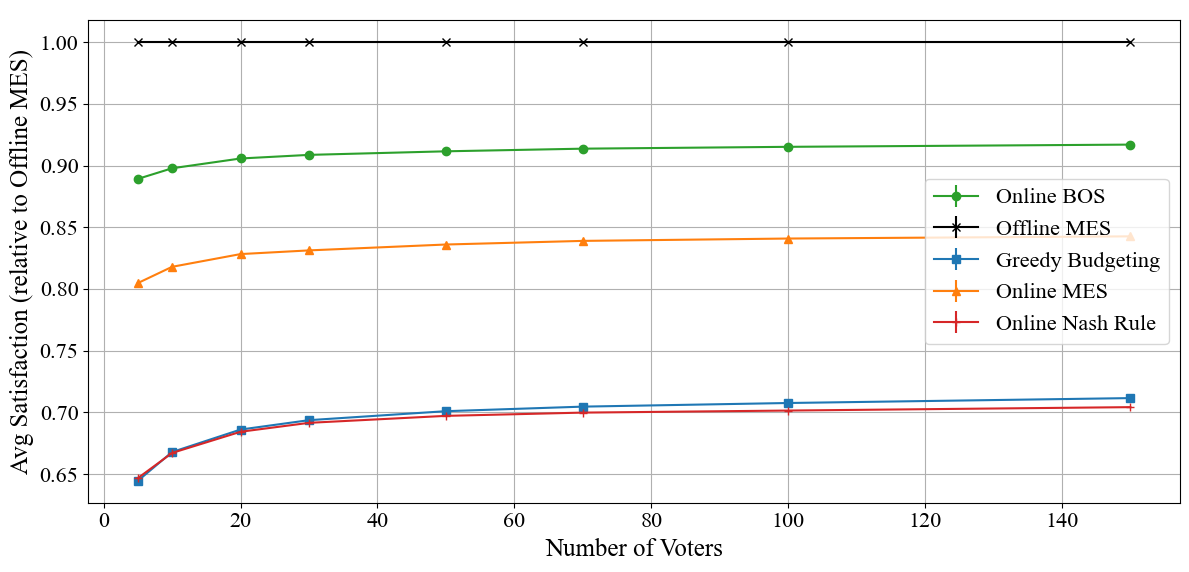}
    \end{subfigure}
    
    \vspace{0.5cm}
    
    \begin{subfigure}{0.496\textwidth}
        \centering
        \includegraphics[width=\textwidth]{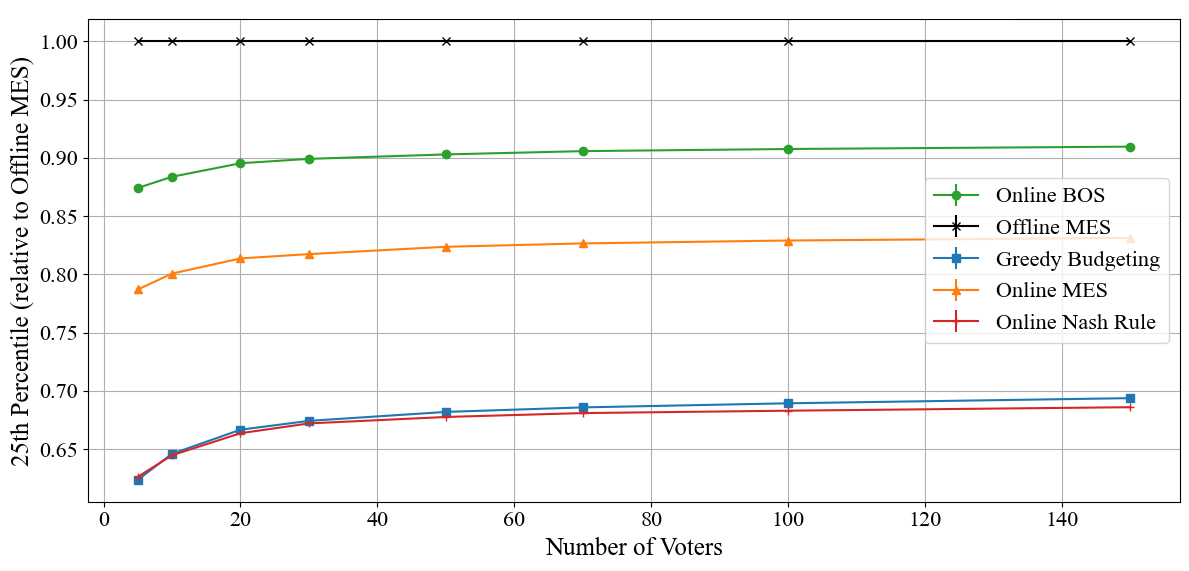}
    \end{subfigure}
    \hfill
    \begin{subfigure}{0.496\textwidth}
        \centering
        \includegraphics[width=\textwidth]{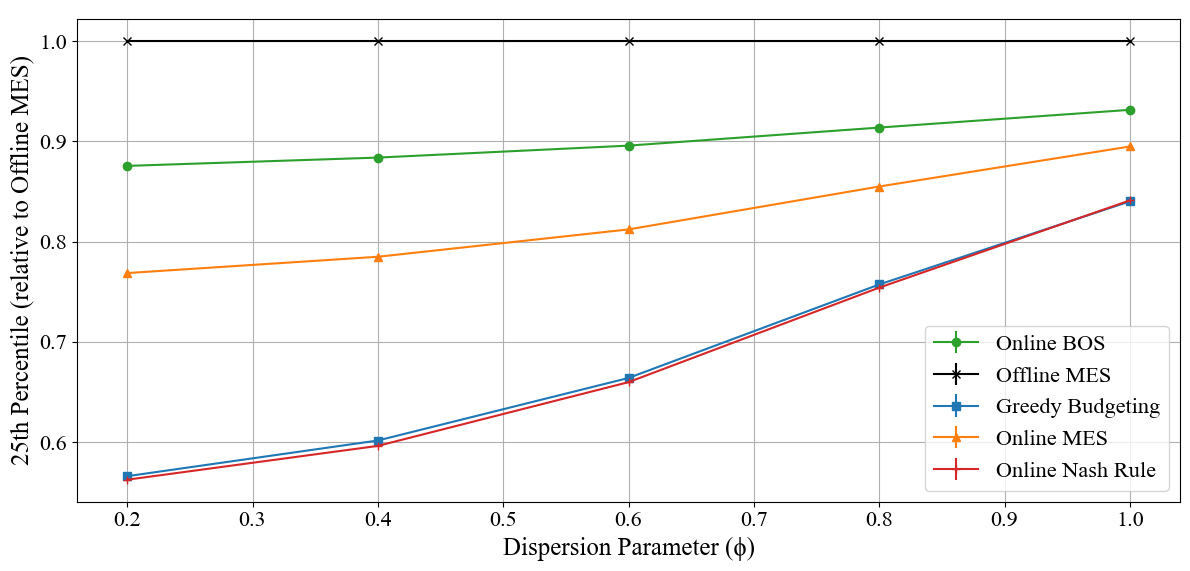}
    \end{subfigure}
    
    \caption{Gini coefficient against the number of voters (top-left), average satisfaction (top-right) and 25th percentile (bottom-left) relative to Offline MES against the number of voters, and 25th percentile relative to Offline MES against the dispersion parameter $\phi$ (bottom-right)
    in the Normalized Mallows model (Experiment 3).}\label{fig:3c/5}
\end{figure}

\subsection[Polarized Instances]{Experiment 4: Polarized Instances}
In this experiment, we generate profiles where proportionality is intuitively easy to capture. Specifically, there are two voter groups: group A approves the first half of candidates, and group B approves some from the second half. For a given value $x$, group A constitutes an $x$-percentage of the voters. Proportionality can be interpreted as requiring at least an $x$-percentage of the winning committee to come from the first half of the candidates.
We created $300$ profiles by randomly selecting values for $n$, $m$, $k$, $x$ and the approval rate of voters from group B, running $10$ different random candidate orders for each profile. 
We measure the frequency with which group A receives less representation than they deserve and we also quantify the extent of this underrepresentation.
Since JR guarantees no shortfall here, Greedy Budgeting produces outcomes with no violations by design. Online MES fails in about a fifth of the cases, Online BOS in slightly more, and the Nash Rule only in a very few. However, in terms of average shortfall, the Nash Rule shows a deficit of around 4 candidates, while Online BOS stays below 1; Online MES falls in between.
BOS also performs twice as well in terms of the maxima of deficits observed.
We report results on percentage and deficit of underperformance in \Cref{tab:underperformance_deficits}.

\begin{table}[t]
\centering
\small
\begin{tabular}{lccc}
\toprule
\textbf{Algorithm} & \textbf{Underperformance} & \textbf{Average Deficit} & \textbf{Maximum Deficit} \\
\midrule
\text{Greedy Budgeting}     & 0.00\%  & 0.0000 & 0.0000 \\
\text{Online MES}           & 19.67\% & 1.8576 & 8.9000 \\
\text{Online BOS}           & 27.00\% & 0.8556 & 4.1000 \\
\text{Online Nash Rule}     & 2.67\%  & 3.9375 & 8.1000 \\
\bottomrule
\end{tabular}
\caption{Underperformance percentages, average deficits across all instances, and maximum deficits per instance, computed as the average of maxima over multiple runs of the same instance under different candidate arrival orders (Experiment 4).}
\label{tab:underperformance_deficits}
\end{table}

\section{Concluding Discussion}
\label{sec:conclusion}
It is worth emphasizing that the proposed rules follow fundamentally different strategies, each with its own selection pattern. The Greedy Budgeting rule prioritizes candidates who appear early. This principle does not apply to Online MES (or BOS), which reserve approximately the first 37\% of candidates as an observation phase and only begin selecting afterwards.
The Nash Rule, in contrast, ensures a more distributed selection by dividing candidates into segments and selecting one from each.
These behaviors are clearly illustrated in \Cref{example1}, where the blue committee reflects the outcome of the Greedy rule, the green committee corresponds to Online MES (and BOS), and the red one to the Nash Rule. 
These differences showcase the importance of the arrival order of candidates. If some additional information were available---say, from past elections or recruitment cycles---it could guide the choice of which rule to apply. This points to a promising direction for future work: designing fair online multi-winner election rules that make use of predictions such as forecasts of future preferences. 
In fact, the negative results from \Cref{sec:certainty} might have been avoided if such predictive information had been available. Another way to potentially avoid such strongly negative findings is to significantly restrict the allowed scores in range voting.

Among the proposed rules, Greedy Budgeting has a remarkable advantage: it is the only one that does not require knowing the total number of candidates in advance (except during its final safeguard step ensuring exactly $k$ selections). The rest rely on the value of $m$ to determine the transition point between the observation and selection phases. While assuming knowledge of the number of candidates is standard in the literature of online algorithms, designing fair rules that operate without this assumption remains an important direction for future work. 
The simplicity of Greedy Budgeting is also a practical benefit and this relates to a further drawback of Online MES and BOS: their running time. Even though they run in polynomial time, they need to call the MES or BOS subroutine $O(m)$ times, which can be impractical in some scenarios. Indicatively, in the real-world datasets of Experiment 1, Greedy Budgeting rule took on average 0.01s per instance, making it significantly faster than Online MES, which needed around 1.1s. Online MES was itself about 6 times faster than Online BOS. Nash Product rule, with an average of 0.3s per instance, was also sufficiently fast.

As mentioned, our work can be seen as a follow-up to the work by \citet{Do}. There are two clear directions to generalize their work: one is what we pursued, i.e., generalizing ballots from approval to cardinal (a setting perfectly suited for scenarios such as secretary hiring that motivated our study), and the other is moving beyond multi-winner elections. The natural, yet non-trivial, next step is to explore proportionality in settings like PB, where candidates have costs and must be selected within a budget, still in an online manner. This can, in principle, be applied in online blockchain-based public funding platforms, such as Gitcoin, the Web3 Foundation, or Project Catalyst.
Participatory Budgeting brings new challenges because the number of winners is not fixed, which calls for significantly different insights or techniques.

\vspace{1.5cm}\noindent\textbf{Acknowledgments.} 
\small{
We thank Tomasz Wąs, Piotr Faliszewski and Piotr Skowron for their assistance and their feedback on earlier versions of this work. The authors were supported by the European Union (ERC, PRO-DEMOCRATIC, 101076570). Views and opinions expressed are however those of the authors only and do not necessarily reflect those of the European Union or the European Research Council. Neither the European Union nor the granting authority can be held responsible for them.}
\begin{figure}[h]
    \centering
    \includegraphics[width=0.37\linewidth]{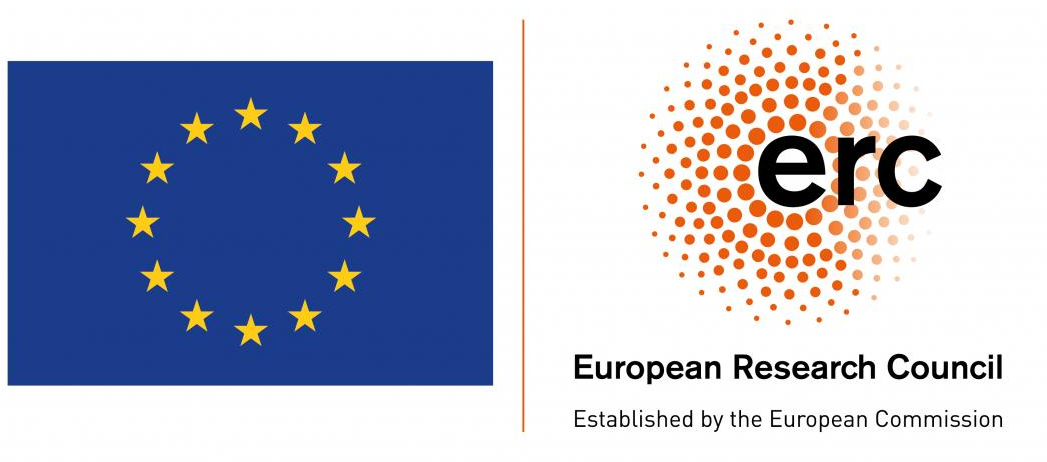}
    \label{erceu}
\end{figure}

{\small{\bibliography{biblio}}}

\end{document}